\theoremstyle{plain}
\newtheorem{proposition}{Proposition}
\newtheorem{lemma}[proposition]{Lemma}
\newtheorem{theorem}[proposition]{Theorem}
\theoremstyle{definition}
\newtheorem{remark}[proposition]{Remark}
\newtheorem{example}[proposition]{Example}
\title{Vector systems of Painlev\'e type}
\author{V.E.\:Adler\thanks{L.D.\:Landau Institute for Theoretical Physics, Chernogolovka, Russian Federation},\quad 
V.V.\:Sokolov\thanks{Higher School of Modern Mathematics MIPT, Moscow, Russia} \thanks{National Research University ``Higher School of Economics''. E-mail: vsokolov@landau.ac.ru}} 
\date{January 07, 2026}
\begin{document}
\maketitle

\begin{abstract}
The group reduction procedure is applied to vector generalizations of the NLS, mKdV, and KdV equations. The resulting ODE systems admit isomonodromic Lax representations and are multicomponent generalizations of the Painlev\'e equations \ref*{P1}, \ref*{P2}, \ref*{P34}, and \ref*{P4}. Some of them can be interpreted as nonautonomous deformations of well-known systems integrable in the Liouville sense, in particular, the Garnier and H\'enon--Heiles systems. In one case, an unexpected connection with the equations of quasiperiodic dressing chain for the Schr\"odinger operator is established.
\end{abstract}


\section{Introduction}

It is well known that Painlev\'e equations arise from integrable PDEs when passing to solutions that are invariant with respect to a certain group of point transformations (see, for example, \cite{ConteMusette} and references therein). In particular, the most fundamental models, the Korteweg--de Vries equation (KdV) 
\[
 u_t=u_{xxx}+6uu_x,   
\]
its modification (mKdV) 
\[
 u_t=u_{xxx}+6u^2u_x,
\]
and nonlinear Schr\"odinger equation (NLS)
\[
 u_t=u_{xx}+2u^2v,\quad -v_t=v_{xx}+2v^2u,
\]
are related by similarity reductions with the Painlev\'e equations
\begin{align}
\label{P1}\tag{P$_1$}
 w''&= 6w^2+z,\\
\label{P2}\tag{P$_2$}
 w''&=2w^3+zw+\alpha,\\
\label{P34}\tag{P$_{34}$}
 w''&=\frac{(w')^2}{2w}+2w^2-zw+\frac{\alpha}{w},\\
\label{P4}\tag{P$_4$}
 w''&=\frac{(w')^2}{2w}+\frac{3}{2}w^3+4zw^2+2(z^2-\alpha)w+\frac{\beta}{w},
\end{align}
as shown in the following table:
\begin{center}
\begin{tabular}{c|cc}
    & Galilean & scaling \\
    & transformation & \\
\hline
KdV  & \ref*{P1} & \ref*{P34} \\
mKdV & ---       & \ref*{P2} \\
NLS  & \ref*{P2} & \ref*{P4} \\
\end{tabular}
\end{center}
The goal of our work is to generalize these results for vector analogs of the equations listed above. By vector equations we mean multicomponent systems in which the field variables are vectors of arbitrary dimension which enter the equations linearly with coefficients expressed by functions of scalar products of these vectors and their derivatives. A typical example is the Manakov system \cite{Manakov_1973}, which defines a ${\rm GL}(n)$-invariant generalization of the NLS equation:
\[
  {\bf u}_t={\bf u}_{xx}+2({\bf u},{\bf v}){\bf u},\quad 
 -{\bf v}_t={\bf v}_{xx}+2({\bf u},{\bf v}){\bf v},\quad 
  {\bf u},{\bf v}\in{\mathbb R}^n.
\]
From here on, vectors are denoted in bold. Many vector systems with integrable properties are known (see, for example, \cite{Sokolov_Svinolupov_1994, Sokolov-Wolf, Meshkov-Sokolov}), but in this paper we restrict ourselves to few most well-known ones. In the vector case, the similarity reduction procedure itself does not change essentially, but several peculiarities should be taken into account.

1) For each of the evolution equations above, there are at least two integrable vector analogs; this adds variety to the picture. For the NLS equation, in addition to the Manakov system, there is also the Kulish--Sklyanin system \cite{Kulish_Sklyanin_1981} (which is ${\rm O}(n)$-invariant). As in the scalar case, the third-order symmetries of these two systems admit the reductions ${\bf v}={\bf u}$ and ${\bf v}=\operatorname{const}$, which yield two vector analogs for the mKdV and KdV equations.

It should be noted that even more general NLS and mKdV systems associated with symmetric spaces were introduced in \cite{Fordy_Kulish_1983, Athorne_Fordy_1987}. In particular, the Manakov system (more precisely, its Hermitian version) is associated with the symmetric space ${\rm SU}(n+1)/{\rm S}({\rm U}(1)\times{\rm U}(n))$ of AIII type in the Helgason classification, and the Kulish--Sklyanin system corresponds to the symmetric space ${\rm SO}(n+2)/({\rm SO}(n)\times{\rm SO}(2))$ of BDI type. We restrict ourselves to these two examples, since the equations for them admit a convenient coordinate-free notation using only vectors and scalar products, but, in principle, the study of similarity reductions makes sense in a broader setting as well. Painlev\'e type reductions for some matrix equations, also related to systems from \cite{Fordy_Kulish_1983, Athorne_Fordy_1987}, were considered in our previous paper \cite{Adler_Sokolov_2021}.

2) Similarity reduction with respect to a one-parameter symmetry subgroup generally leads not directly to a Painlev\'e equation, but to some equation of higher order in derivatives. The order is then reduced by first integrals and/or by passing to invariants of the remaining subgroup of transformations. This is true for both scalar and vector equations, but in the vector case, the reducing of order becomes more complicated and cannot always be achieved while maintaining vector notation, without using the coordinates. Therefore, the name ``vector Painlev\'e equations'' which we use is rather arbitrary and reflects the origin of the systems under consideration rather than their structure.

3) The group of classical symmetries is expanded by linear transformations from the groups ${\rm GL}(n)$ or ${\rm O}(n)$. Thanks to this, an arbitrary constant matrix (of general form or skew-symmetric) appears in the reduced equations, which turns out to be very important, because it eliminates the degeneracy inherent in any isotropic vector ODE. For example, the mixed scalar-vector generalization of the equation \ref{P1} (see section \ref{s:KdV}) has the form
\begin{equation}\label{KdV-Gal'}
 u''=\frac{3}{2}u^2-\frac{3}{2}({\bf v},{\bf v})+z,\quad 
 {\bf v}'''=3u{\bf v}'+3u'{\bf v}+A{\bf v},\quad 
 u\in{\mathbb R},~~ {\bf v}\in{\mathbb R}^n,~~ A+A^T=0.
\end{equation}
If $A=0$ then the order is reduced by integration and we arrive to the system
\[
 u''=\frac{3}{2}u^2-\frac{3}{2}({\bf v},{\bf v})+z,\quad {\bf v}''=3u{\bf v}+{\bf b}.
\]
This system has a number of interesting properties, however, its main drawback is that the number of unknown functions can be reduced from $n+1$ to 3 (see Conclusion), and therefore, unlike the system (\ref{KdV-Gal'}), it is not a ``genuine'' $n$-component analog of the \ref{P1} equation. This does not mean, however, that this system is of no interest for small $n$.

\paragraph{Outline of the article.} We begin the presentation with the NLS as the main model. In Section \ref{s:Manakov}, we describe Painlev\'e type reductions for the Manakov system. They lead to nonautonomous deformations of the Garnier system \cite{Garnier_1919} 
\begin{equation}\label{Garnier}
 {\bf u}''+2({\bf u},{\bf v}){\bf u}+\Omega {\bf u}=0,\quad {\bf v}''+2({\bf u},{\bf v}){\bf v}+\Omega {\bf v}=0
\end{equation}
where $\Omega$ is a diagonal matrix. Recall that this is a Liouville integrable system and its connection with finite-dimensional reductions of integrable equations is well known, see in particular \cite{Choodnovsky_1978, Antonowicz_Rauch-Wojciechowski_1990, Suris_2003}. One of its non-autonomous deformations was studied in \cite{Adler_Kolesnikov_2023, Domrin_Suleimanov_2025} in connection with string equations for the (scalar) KdV hierarchy; related systems were introduced in \cite{Orlov_Rauch-Wojciechowski_1993}. The second deformation is apparently new, but nevertheless, it turns out to be rather unexpectedly related to another important Painlev\'e type system, namely, the quasiperiodic closure of the dressing chain studied by Veselov and Shabat \cite{Veselov_Shabat_1993}. We discuss this connection separately in Section \ref{s:dc}.

Section \ref{s:KS} describes similarity reductions for the Kulish--Sklyanin system. In the autonomous case, a generalization of the Garnier system to the symmetric space of BDI type appears, which was obtained, together with generalizations to other symmetric spaces, in paper \cite{Fordy_Wojciechoski_Marshall_1986}; see also \cite[Chapter 25]{Suris_2003} for a detailed presentation of such results as Hamiltonian structure, Lax representations, Liouville integrability, and B\"acklund transformations. The Painlev\'e type similarity reductions define non-autonomous deformations for this system, which apparently has not been studied before.

In Section \ref{s:mKdV} we consider two versions of the mKdV equation and related third-order systems of Garnier type. Like all other examples, they contain an arbitrary constant matrix $A$ (which is skew-symmetric for this model). Section \ref{s:A0} is devoted to the fully isotropic case $A=0$ for these systems. In this case, the system can be restricted to the invariants of the group ${\rm O}(n)$, which allows the overall order of the system to be reduced to 4, regardless of the dimension of $n$. Of course, intermediate cases of degeneracy are also worth studying, but this is beyond the scope of our article.

Section \ref{s:KdV} is devoted to vector generalizations of the H\'enon--Heiles system
\begin{equation}\label{scalhh}
u''= a u^2+b v^2, \qquad v''= c u v.
\end{equation}
It is easy to see that $\sigma=a/c$ is an invariant of scaling transformations. It is known (see, for example, \cite{ConteMusette}) that the system \eqref{scalhh} is integrable if and only if {\bf 1.} $\sigma =\frac{1}{2}$, {\bf 2.} $\sigma=3$, {\bf 3.} $\sigma= 8$. Non-autonomous deformations of scalar systems \eqref{scalhh} were studied in \cite{Hone_1998}. In \cite{Antonowicz_Rauch-Wojciechowski_1992}, autonomous vector generalizations of the cases {\bf 1}--{\bf 3} were constructed using the eigenfunctions of the Lax operators. Similarity reductions of the vector KdV equation in Section \ref{s:KdV} lead to nonautonomous vector generalizations of case {\bf 1}, one of which is the system \eqref{KdV-Gal'}. Their autonomous versions differ from those studied in \cite{Antonowicz_Rauch-Wojciechowski_1992, Suris_2003}.

For all obtained Painlev\'e type systems, we present the isomonodromic Lax pairs 
\begin{equation}\label{AB-pair}
 {\cal A}'={\cal B}_\zeta+[{\cal B},{\cal A}]
\end{equation}
where $\zeta$ is the spectral parameter, which are inherited under similarity reductions from known zero curvature representations
\begin{equation}\label{UV-pair}
 {\cal U}_t={\cal V}_x+[{\cal V},{\cal U}]
\end{equation}
for respective PDE systems. The procedure for prolongation of a similarity reduction to linear problems is fairly standard and we omit it. The matrices involved have a block structure; when writing them, ${\bf u}$ and ${\bf v}$ are interpreted as column vectors, and row vectors are denoted as ${\bf u}^T$ and ${\bf v}^T$; the identity matrix of size $n\times n$ is denoted as $I_n$. 

\section{Manakov system and deformations of Garnier system}\label{s:Manakov}

The first vector generalization of NLS, the Manakov system \cite{Manakov_1973}, reads
\begin{equation}\label{NLS-1}
 {\bf u}_t={\bf u}_{xx}+2({\bf u},{\bf v}){\bf u},\qquad -{\bf v}_t={\bf v}_{xx}+2({\bf u},{\bf v}){\bf v}.
\end{equation}
In the scalar case, it admits classical symmetries with generators
\[
\begin{aligned}
 & u_{\tau_0}=u,~~ v_{\tau_0}=-v && \text{hyperbolic rotation of $u$ and $v$},\\
 & u_{\tau_1}=u_x,~~ v_{\tau_1}=v_x && \text{$x$-translation},\\
 & u_{\tau_2}=u_t,~~ v_{\tau_2}=v_t && \text{$t$-translation},\\
 & u_{\tau_3}=2tu_x+xu,~~ v_{\tau_3}=2tv_x-xv && \text{Galilean transformation},\\
 & u_{\tau_4}=2tu_t+xu_x+u,~~ v_{\tau_4}=2tv_t+xv_x+v &\qquad& \text{scaling}.
\end{aligned}
\]
These transformations are preserved in the vector case, but the first one is generalized and replaced by invariance with respect to the transformations ${\bf u}\mapsto P{\bf u}$, ${\bf v}\mapsto(P^T)^{-1}{\bf v}$ with $P\in{\rm GL}(n)$, which correspond to infinitesimal generators with arbitrary matrices $A$:
\[
 {\bf u}_\tau=A{\bf u},\qquad {\bf v}_\tau=-A^T{\bf v}.
\]
The solutions of (\ref{NLS-1}) in the form of a traveling wave modulated by matrix exponential are defined by
\begin{equation}\label{uv-wave}
 {\bf u}(x,t)=e^{-kz-tA}{\bf u}(z),\quad {\bf v}(x,t)=e^{kz+tA^T}{\bf v}(z),\quad z=x-2kt
\end{equation}
where ${\bf u}(z)$ and ${\bf v}(z)$ satisfy the system
\[
 {\bf u}''+2({\bf u},{\bf v}){\bf u}+(A-k^2){\bf u}=0,\qquad {\bf v}''+2({\bf u},{\bf v}){\bf v}+(A^T-k^2){\bf v}=0.
\]
This coincides with the classical Garnier system (\ref {Garnier}) if the matrix $A$ is diagonal, which corresponds to the generic case when all eigenvalues are simple. Recall that this system is Hamiltonian and has $2n$ first integrals in involution (see \cite{Suris_2003} and references therein).

Similarity reductions with respect to other one-parameter subgroups are generalized in a similar way. Galilean-invariant solutions are defined by the formulas
\begin{equation}\label{uv-Gal}
 {\bf u}(x,t)=e^{r-tA}{\bf u}(z),\quad 
 {\bf v}(x,t)=e^{-r+tA^T}{\bf v}(z),\quad 
 r=\frac{t^3}{6}-\frac{xt}{2},\quad 
 z=x-\frac{t^2}{2}.   
\end{equation}
As the result of this substitution, (\ref{NLS-1}) turns into the system
\begin{equation}\label{Garnier-1}
 {\bf u}''+2({\bf u},{\bf v}){\bf u}+\frac{1}{2}z{\bf u}+A{\bf u}=0,\qquad
 {\bf v}''+2({\bf u},{\bf v}){\bf v}+\frac{1}{2}z{\bf v}+A^T{\bf v}=0,
\end{equation}
which defines a non-autonomous deformation of the Garnier system. If ${\bf u}$ and ${\bf v}$ are scalars, then, as we show below, this system reduces to the equation \ref{P34} (or \ref{P2}, which is equivalent).

The similarity reduction with respect to the scaling subgroup is defined by the formulas
\begin{equation}\label{uv-self}
 {\bf u}(x,t)=t^{-1/2}e^{-\log(t)A}{\bf u}(z),\quad
 {\bf v}(x,t)=t^{-1/2}e^{\log(t)A^T}{\bf v}(z),\quad 
 z=t^{-1/2}x.
\end{equation} 
As the result of this substitution, (\ref{NLS-1}) turns into the system
\begin{equation}\label{Garnier-2}
 {\bf u}''+2({\bf u},{\bf v}){\bf u}+\frac{1}{2}(z{\bf u})'+A{\bf u}=0,\qquad
 {\bf v}''+2({\bf u},{\bf v}){\bf v}-\frac{1}{2}(z{\bf v})'+A^T{\bf v}=0,
\end{equation}
which defines the second deformation of the Garnier system. In the scalar case, it reduces to the \ref{P4} equation (see Example \ref{ex:P4} below).

The isomonodromic representations (\ref{AB-pair}) for both systems (\ref{Garnier-1}) and (\ref{Garnier-2}) correspond to the same matrix
\[
 {\cal B}=\zeta{\cal B}_0+{\cal B}_1,\quad
 {\cal B}_0=\begin{pmatrix}
  1-\nu & 0\\
   0 & I_n  
 \end{pmatrix},\quad
 {\cal B}_1=\begin{pmatrix}
  0 & -\bf v^T\\
  \bf u & 0  
 \end{pmatrix},
\]
where $\nu\ne0$ is an arbitrary number. The matrix $\cal A$ for the system (\ref{Garnier-1}) is 
\[
 {\cal A}=(2\nu^2\zeta^2+z){\cal B}_0+2\nu^2\zeta{\cal B}_1+2\nu{\cal B}_2,\qquad 
 {\cal B}_2=\begin{pmatrix}
   -({\bf u},{\bf v}) & ({\bf v}')^T\\
    {\bf u}' & {\bf u}{\bf v}^T+A  
 \end{pmatrix},
\]
and for the system (\ref{Garnier-2})
\[
 {\cal A}=(2\nu+z\zeta^{-1})(\zeta{\cal B}_0+{\cal B}_1)+2\zeta^{-1}{\cal B}_2.
\]
Notice that the representations with different parameter $\nu\ne0$ are equivalent to each other, up to a suitable scaling of $\zeta$. This parameter is due to the fact that the matrices from any zero-curvature representation are defined up to adding arbitrary constant scalar matrices. In the $\operatorname{gl}_2$ case (that is, for $n=1$), it is common to fix this arbitrariness by choosing traceless matrices, which correspond to $\nu=2$. For the $\operatorname{gl}_{n+1}$ case this is also possible, by setting $\nu=n+1$, however, this choice may be less convenient than, for instance, $\nu=1$. By this reason we prefer to keep $\nu$ arbitrary. The same remark is applied to the matrices related to the equation (\ref{P21}) and the system (\ref{P21-pq}).

Both systems (\ref{Garnier-1}) and (\ref{Garnier-2}) have an incomplete set of first integrals, which allows for a partial reducing of order. This is more simply arranged for the system (\ref{Garnier-1}). If we turn to the components of the vectors and take $A=\operatorname{diag}(\alpha_1,\dots,\alpha_n)$ (the generic case), then
\[
 u_iv'_i-u'_iv_i=c_i=\operatorname{const},\qquad i=1,\dots,n,
\]
while the products $y_i=u_iv_i$ satisfy the equations
\begin{equation}\label{P34-n}
 y''_i=\frac{(y'_i)^2-c^2_i}{2y_i}-y_i(4y_1+\dots+4y_n+z+2\alpha_i),\qquad i=1,\dots,n.
\end{equation} 
For $n=1$, it is easy to see that the equation for $2y_1$ coincides with equation \ref{P34} up to adding a constant to $z$. The system (\ref{P34-n}) can be viewed as a multicomponent generalization of this Painlev\'e equation. This system was studied in \cite{Adler_Kolesnikov_2023} within a construction related to negative KdV symmetries, see also papers \cite{Antonowicz_Rauch-Wojciechowski_1990, Orlov_Rauch-Wojciechowski_1993}. Its analytic properties were investigated in the paper \cite{Domrin_Suleimanov_2025}.

Reducing of order for system (\ref{Garnier-2}) is less obvious (but also more interesting); it is discussed in the next section.

\section{System (\ref{Garnier-2}) and the quasiperiodic dressing chain}\label{s:dc}

This section is devoted to the following system of ODEs on the variables $\psi_i(x)$ and $\varphi_i(x)$:
\begin{equation}\label{G2}
 \psi''_i=(q-\alpha_i-1)\psi_i,\quad
 \varphi''_i=(q-\alpha_i+1)\varphi_i,\quad
 q=x^2-2(\psi_1\varphi_1+\dots+\psi_n\varphi_n).
\end{equation}
Here, the independent variable $x$ does not match the one from the original NLS system (\ref{NLS-1}), which should not lead to misunderstanding. Equations (\ref{G2}) are related to the second deformation of the Garnier system (\ref{Garnier-2}) by the changes of variables
\[
 z=2x,\quad 2u(z)=e^{-x^2/2}\psi(x),\quad 2v(z)=e^{x^2/2}\varphi(x),\quad 
 4A=\operatorname{diag}(\alpha_1,\dots,\alpha_n),
\]
which eliminate terms with first derivatives and simplify the coefficients. The parameters $\alpha_i$ are assumed to be distinct. We also assume that $\psi_i$ and $\varphi_i$ do not vanish identically. This does not lead to a loss of generality, since if, for example, $\psi_i=0$, then the system reduces to the same system of lower dimension and an additional equation for $\varphi_i$.
 
The overall order of the system is reduced from $4n$ to $3n$ by passing to variables
\[
 y_i=\psi_i\varphi_i,\qquad w_i=\psi_i\varphi'_i-\psi'_i\varphi_i,
\]
which leads to equations
\begin{equation}\label{yG2}
 2y_iy''_i=(y'_i)^2+4(q-\alpha_i)y^2_i-w^2_i,\quad w'_i=2y_i,\quad q=x^2-2y_1-\dots-2y_n.
\end{equation}
The variables $\psi_i$ and $\varphi_i$ are reconstructed from $y_i$ and $w_i$ by a quadrature:
\[
 \psi_i= y^{1/2}_ie^{-\chi_i},\quad \varphi_i= y^{1/2}_ie^{\chi_i},\quad \chi_i=\int\frac{w_i\,dx}{2y_i},
\]
which allows us to consider the systems (\ref{G2}) and (\ref{yG2}) as equivalent. Next, reducing of order to $2n$ is carried out as follows. Let us introduce the polynomials
\begin{gather*}
 a(\lambda)=(\lambda-\alpha_1)\cdots(\lambda-\alpha_n)
  =\lambda^n-\lambda^{n-1}\sum^n_{i=1}\alpha_i+\dotsc,\\
 y(x,\lambda)=a(\lambda)\left(1-\frac{y_1}{\lambda-\alpha_1}
     -\dots-\frac{y_n}{\lambda-\alpha_n}\right)
  =\lambda^n-\lambda^{n-1}\sum^n_{i=1}(y_i+\alpha_i)+\dotsc,\\ 
 w(x,\lambda)=a(\lambda)\left(2x-\frac{w_1}{\lambda-\alpha_1}
     -\dots-\frac{w_n}{\lambda-\alpha_n}\right)
  =2x\lambda^n+\dotso. 
\end{gather*}

\begin{theorem}
The polynomials $y$ and $w$ satisfy the equations
\begin{equation}\label{ywc}
 2yy''=(y')^2+4(q-\lambda)y^2-w^2+c,\quad w'=2y,\quad c=4a(\lambda)b(\lambda)
\end{equation}
where $b(\lambda)=(\lambda-\beta_1)\cdots(\lambda-\beta_{n+1})$ is a polynomial with constant coefficients, such that
\begin{equation}\label{ab}
 \sum^n_{i=1}\alpha_i=\sum^{n+1}_{i=1}\beta_i.
\end{equation}
\end{theorem}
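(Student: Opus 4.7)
I would first note that $w' = 2y$ as polynomials in $\lambda$ follows immediately by term-by-term differentiation of the partial-fraction expansions and applying the identities $w'_i = 2y_i$ from (\ref{yG2}). The bulk of the work is to establish the first equation of (\ref{ywc}). Introduce
\[
 F(x,\lambda) := 2yy'' - (y')^2 - 4(q-\lambda)y^2 + w^2,
\]
and the goal is to show $F = 4a(\lambda)b(\lambda)$ with $b$ a monic polynomial of degree $n+1$ having $x$-independent coefficients.

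Divisibility of $F$ by $a(\lambda)$ comes from evaluation at each simple pole: at $\lambda = \alpha_k$, only the $i=k$ term in the partial-fraction expansion of $y$ (resp.\ $w$) survives, yielding $y(x,\alpha_k) = -a'(\alpha_k)y_k$ and $w(x,\alpha_k) = -a'(\alpha_k)w_k$, with analogous identities for the $x$-derivatives. Substituting and pulling out the factor $a'(\alpha_k)^2$ leaves exactly $2y_ky''_k - (y'_k)^2 - 4(q-\alpha_k)y^2_k + w^2_k$, which vanishes by (\ref{yG2}). For the $x$-independence of $F$, a direct computation (using $w' = 2y$) gives $\partial_x F = 2y\cdot G$ with
\[
 G(x,\lambda) := y''' - 2q'y - 4(q-\lambda)y' + 2w.
\]
Differentiating the first equation of (\ref{yG2}) in $x$ and using $w'_k = 2y_k$ yields the third-order relation $y'''_k = 2q'y_k + 4(q-\alpha_k)y'_k - 2w_k$, from which the same residue argument as above gives $G(x,\alpha_k)=0$ for every $k$. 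Since $\deg_\lambda G \le n$, this forces $G = \gamma(x)\,a(\lambda)$; a short calculation of the $\lambda^n$ coefficient of $G$, using $q' = 2x - 2\sum_k y'_k$ together with the leading coefficients $1$, $-\sum_k y'_k$, $2x$ of $y$, $y'$, $w$, shows that it cancels, so $\gamma \equiv 0$ and $G \equiv 0$. Hence $\partial_x F = 0$.

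Combining the two properties, $F/a(\lambda)$ is a polynomial in $\lambda$ with $x$-independent coefficients. A degree and leading-coefficient analysis of $F$ (the highest contribution comes from $-4(q-\lambda)y^2$, giving degree $2n+1$ with leading coefficient $4$) yields $F = 4a(\lambda)b(\lambda)$ with $b$ monic of degree $n+1$, hence $b(\lambda) = \prod_{j=1}^{n+1}(\lambda - \beta_j)$ for some constants $\beta_j$. Matching the $\lambda^{2n}$ coefficients on both sides gives the constraint (\ref{ab}). The computational heart of the argument, and the main potential obstacle, is the identity $G \equiv 0$: it requires both the derived third-order relation for each $y_k$ and the somewhat delicate cancellation in the $\lambda^n$ coefficient of $G$, without which one only learns that $F$ satisfies a non-autonomous Lax-type evolution rather than being a conservation law.
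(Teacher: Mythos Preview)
Your proof is correct and uses essentially the same ingredients as the paper's proof: the third-order consequence $y'''_k=4(q-\alpha_k)y'_k+2q'y_k-2w_k$ of (\ref{yG2}), the cancellation coming from $q'=2x-2\sum_k y'_k$, evaluation at $\lambda=\alpha_k$ to obtain divisibility by $a(\lambda)$, and the leading-coefficient count for (\ref{ab}). The only organizational difference is that the paper first derives the polynomial identity $y'''=4(q-\lambda)y'+2q'y-2w$ by a direct summation over $i$ (your $G\equiv0$), multiplies by $2y$ and integrates to produce $c(\lambda)$, and only then checks $c(\alpha_i)=0$; you instead define $F$, check $F(x,\alpha_k)=0$ first, and then establish $\partial_xF=0$ via the interpolation argument for $G$---but the computations are the same.
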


\begin{proof}
The equation $w'=2y$ follows immediately from the equations $w'_i=2y_i$. Differentiation of the first equation (\ref{yG2}) gives
\[
 y'''_i=4(q-\alpha_i)y'_i+2q'y_i-2w_i.
\]
Let us multiply these equalities by $-a(\lambda)/(\lambda-\alpha_i)$ and sum over $i$. This gives
\begin{align*}
 y''' &= 4(q-\lambda)y' -4a\sum^n_{i=1}y'_i +2q'(y-a)-2w+4ax\\
  &= 4(q-\lambda)y'+2q'y-2w - 2a\left(q'+2\sum^n_{i=1}y'_i-2x\right)\\
  &= 4(q-\lambda)y'+2q'y-2w,
\end{align*}
where we have used the last equation (\ref{yG2}). By multiplying the obtained relation by $2y$ and integrating, we arrive at the first equation (\ref{ywc}). The integration constant $c(\lambda)$ is a polynomial, since $y$ and $w$ are polynomials, and from the comparison of the highest degrees it follows that $c=4\lambda^{2n+1}+\dotso$. From the definition of $y$ and $w$ it is easy to see that
\[
 y(x,\alpha_i)=-a'(\alpha_i)y_i,\qquad w(x,\alpha_i)=-a'(\alpha_i)w_i,
\]
moreover $a'(\alpha_i)\ne0$, since the zeroes $\alpha_i$ are simple by assumption. Then a comparison of Eqs. (\ref{yG2}) and (\ref{ywc}) proves that $c(\alpha_i)=0$, that is, the polynomial $c$ is divisible by $a$. Let $c=4ab$ and $\beta_i$ be the zeros of $b$. Then the coefficients of $\lambda^{2n}$ in (\ref{ywc}) yield the equality
\[
 q-x^2+2\sum^n_{i=1}y_i+\sum^n_{i=1}\alpha_i-\sum^{n+1}_{i=1}\beta_i=0
\]
and the relation (\ref{ab}) follows by taking into account the definition of $q$ from (\ref{yG2}).
\end{proof}

Thus, the first equation (\ref{ywc}) gives $n$ first integrals for the system (\ref{yG2}), the values of which are determined by the constants $\beta_i$.

\begin{example}\label{ex:P4} 
For the simplest case $n=1$, the system (\ref{yG2}) reads
\begin{equation}\label{yG2-1}
 2y_1y''_1=(y'_1)^2+4(x^2-2y_1-\alpha_1)y^2_1-w^2_1,\qquad w'_1=2y_1.
\end{equation}
Let us substitute the polynomials $a=\lambda-\alpha_1$, $y=\lambda-y_1-\alpha_1$, $w=2x(\lambda-\alpha_1)-w_1$ and $c=4(\lambda-\alpha_1)(\lambda-\beta_1)(\lambda-\beta_2)$ into (\ref{ywc}). By equating the coefficients of $\lambda$, we obtain Eqs.~(\ref{yG2-1}) themselves, the relation $\alpha_1=\beta_1+\beta_2$, and an additional equation
\[
 \frac{(y'_1)^2-w^2_1}{y_1}+4y^2_1-4(x^2-\alpha_1)y_1+4xw_1+4\beta_1(\alpha_1-\beta_1)=0,
\]
which serves as a first integral for (\ref{yG2-1}), as can be directly verified. Thus, the order of the system is reduced to 2. In principle, it is possible to transform this system to a single second-order equation for $y_1$ or $w_1$, but it will be quadratic in the second derivative. It turns out that the ``more suitable'' variable is
\[
 f=\frac{y'_1+w_1}{2y_1}-x,
\]
which, as one can verify, satisfies the equation \ref{P4} with the parameter values $\alpha=1-\alpha_1/2$ and $\beta=-(2\beta_1-\alpha_1)^2/2$. The origin of this substitution is explained in Proposition \ref{pr:G2ff} below.
\end{example} 

For the autonomous Garnier system (\ref {Garnier}), the equations for $y$ and $w$ differ in that the potential $q$ does not contain the term $x^2$ and the polynomial $w$ is constant. In this case, the equations (\ref {ywc}) reduce to the system of Dubrovin equations of order $n$ for the zeroes of the polynomial $y$, defining an $n$-gap potential $q$, where the endpoints of the gaps are determined by the zeroes of the polynomial $w^2-c$ \cite{Novikov_1974, Dubrovin_1975}. The equivalence between the Novikov--Dubrovin equations and the Garnier system was established in the article \cite{Choodnovsky_1978} (see also \cite[Chapter 22]{Suris_2003}).

The passage to the zeroes of $y$ is possible for our system as well, but this leads to a significantly more complicated system of order $2n$, since the polynomial $w^2-c$ also evolves according to the equation $w'=2y$. It turns out that a simpler form of the system is related to the quasiperiodic closure of the dressing chain studied in the celebrated paper by Veselov and Shabat \cite{Veselov_Shabat_1993}. Recall that a family of Schr\"odinger operators $L=-d^2/dx^2+q(x)$ was introduced in this paper, such that the action of elementary Darboux transformations $q_i\mapsto q_{i+1}$ defined by the formulas
\begin{equation}\label{fq}
 q_i=f'_i+f^2_i+\gamma_i,\qquad q_{i+1}=-f'_i+f^2_i+\gamma_i
\end{equation}
brings after $N$ steps to the original potential shifted by a constant amount: $q_{i+N}=q_i+2\delta$. This quasi-periodicity condition leads to a closed system of ODEs for the variables $f_i$:
\begin{equation}\label{ff}
 f'_i+f'_{i+1}=f^2_i-f^2_{i+1}+\gamma_i-\gamma_{i+1},\quad 
 f_{i+N}=f_i,\quad \gamma_{i+N}=\gamma_i+2\delta.
\end{equation}
The properties of this system depend on whether the parameter $\delta$ is zero or not, and on the parity of $N$. In the strictly periodic case $\delta=0$, as shown in \cite{Veselov_Shabat_1993}, the system (\ref{ff}) for odd $N=2n+1$ is equivalent to the Novikov--Dubrovin equations for $n$-gap potentials. Thus, the equivalence with the Garnier system (\ref{Garnier}) is also established. The case of even $N=2n+2$ also reduces to $n$-gap potentials, but in a somewhat more complex way.

If $\delta\ne0$ then the system (\ref{ff}) admits only one non-autonomous first integral $f_1+\dots+f_N+\delta x=\operatorname{const}$, but it passes the Kowalevskaya--Painlev\'e test. For $N=1$ one obtains the usual harmonic oscillator $q_1(x)=\delta^2x^2$, for $N=2$ one obtains the oscillator on a half-line, and $N>2$ leads to transcendental generalizations of these potentials with a spectrum consisting of $N$ arithmetic progressions with step $2\delta$; in particular, the case $N=3$ is associated with the \ref{P4} equation, and $N=4$ with the Painlev\'e P$_5$ equation \cite{Veselov_Shabat_1993, Adler_1993}. 

\begin{proposition}\label{pr:G2ff}
The deformation of the Garnier system (\ref{G2}), in the form (\ref{yG2}), is equivalent to the system (\ref{ff}) with $N=2n+1$ and $\delta=1$.
\end{proposition}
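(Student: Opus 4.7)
The plan is to realize (\ref{G2}) as the closure condition of a Darboux chain of length $N=2n+1$ with shift $2\delta=2$ for the Schr\"odinger operator $L=-d^2/dx^2+q$. The key observation is that (\ref{G2}) says precisely that $\psi_i$ and $\varphi_i$ are eigenfunctions of $L$ at eigenvalues $\alpha_i+1$ and $\alpha_i-1$, differing by exactly $2\delta=2$. In the spectral picture of \cite{Veselov_Shabat_1993}, each pair $(\psi_i,\varphi_i)$ sits on two consecutive rungs of one of the $N$ arithmetic progressions in the spectrum of $L$; the $n$ pairs together account for $2n$ Darboux steps of the chain, and the remaining $(2n+1)$-th step must come from the harmonic-oscillator term $x^2$ of $q$. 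For $n=0$ this extra step is exactly the pure-oscillator chain $q=x^2$, $f_1=-x$ of loc.\,cit.

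First I would construct the forward map. Setting $L_1=L$, I would choose an ordered sequence of $N$ dressing functions $\chi_k$: the first $2n$ are iterated Darboux transforms of $\psi_1,\varphi_1,\psi_2,\varphi_2,\dots,\psi_n,\varphi_n$, and the last $\chi_N$ is an oscillator-type eigenfunction of $L_N$ produced by the successive dressings. Setting $f_k=\chi_k'/\chi_k$ and taking $\gamma_k$ to be the eigenvalue of $\chi_k$ on $L_k$, the standard Darboux factorization $q_k=f_k'+f_k^2+\gamma_k=-f_{k-1}'+f_{k-1}^2+\gamma_{k-1}$ yields the chain equation (\ref{ff}) immediately for each $k$. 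By Crum's Wronskian formula, the product of the two dressings on the $i$-th pair satisfies $\chi_{2i-1}\chi_{2i}=V_i/V_{i-1}$, where $V_i=W(\psi_1,\varphi_1,\dots,\psi_i,\varphi_i)$ is the Wronskian of all eigenfunctions through index $i$ (with $V_0=1$); hence $\sum_{k=1}^{2n}f_k=(\log V_n)'$.

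The heart of the argument, and the main obstacle, is the closure $q_{N+1}=q_1+2$. Telescoping $q_{k+1}-q_k=-2f_k'$ reduces it to $\sum_{k=1}^{N}f_k=-x+\operatorname{const}$, which combined with the formula above forces $\chi_N\propto e^{-x^2/2}/V_n$ up to a multiplicative constant. One then has to check that this $\chi_N$ really is an eigenfunction of $L_N$ at some eigenvalue $\gamma_N$; this is where the bottom equation $q=x^2-2\sum y_j$ of (\ref{yG2}) enters crucially. The point is that $e^{-x^2/2}$ is an eigenfunction of the ``bare'' oscillator $-d^2/dx^2+x^2=L+2\sum y_j$, and the iterated Darboux on the pair eigenfunctions adjusts the intertwiner so that $e^{-x^2/2}/V_n$ becomes an eigenfunction of $L_N$ at the eigenvalue forced by tracking through the chain. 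The converse direction runs the construction backwards: starting from a solution of (\ref{ff}), define $q=f_1'+f_1^2+\gamma_1$, construct eigenfunctions at the $2n$ pair eigenvalues by inverting the chain, and recover $(\psi_i,\varphi_i)$; the unique non-autonomous first integral $\sum f_k+x=\operatorname{const}$ of (\ref{ff}) then forces the $x^2$ term in $q$. A cleaner conceptual route would be via the preceding Theorem: interpret $y(x,\lambda)$ as the spectral polynomial of a non-autonomous ``finite-gap'' potential whose odd-period dressing closure with shift $2\delta$ is precisely (\ref{ff}).
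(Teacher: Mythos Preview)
Your approach via direct construction of the dressing chain from the eigenfunctions $\psi_i,\varphi_i$ is conceptually natural, but it differs substantially from the paper's proof and, as you yourself flag, leaves a genuine gap at the closure step. The paper does not build the chain by successive Darboux transforms on individual eigenfunctions at all; instead it works entirely at the polynomial level, using the matrix representation (\ref{FQ}) of a general Darboux transformation and the Veselov--Shabat factorization theorem: any polynomial matrix $F$ satisfying $F'=\widehat QF-FQ$ factors as $F=F_N\cdots F_1$ into elementary Darboux matrices, with the $\gamma_i$ being the zeros of $c(\lambda)=4\det F$. The point is that the equations (\ref{ywc}) produced by the preceding Theorem already \emph{are} the equations (\ref{ywc'}) for such an $F$ with $\hat q-q=2\delta$; the factorization then \emph{produces} the $f_i$ and the chain (\ref{ff}) automatically, and the leading-coefficient identity $w=-2(f_1+\dots+f_N)\lambda^n+\dots$ gives the first integral $\sum f_k+\delta x=\operatorname{const}$ for free. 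No oscillator eigenfunction has to be constructed by hand, and the converse is immediate since both sides are identified with the same polynomial system (\ref{ywc'}).

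In your argument the decisive step---that $\chi_N\propto e^{-x^2/2}/V_n$ is an eigenfunction of $L_N$ at the right eigenvalue---is asserted but not proved, and this is not a routine verification: it is essentially equivalent to the closure $q_{N+1}=q_1+2$ that you are trying to establish. Your converse direction is likewise only a sketch; knowing $\sum f_k+x=\operatorname{const}$ does not by itself force $q=x^2-2\sum y_j$ without reconstructing the polynomial $y$ and matching coefficients, which is again the paper's computation. Your final sentence (``a cleaner conceptual route would be via the preceding Theorem'') is in fact exactly the route the paper takes, and is where the real content lies. The hands-on Wronskian approach could perhaps be pushed through, but it requires careful bookkeeping of which operator each dressed function belongs to and a nontrivial calculation closing the last step; the factorization argument sidesteps all of this.
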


\begin{proof}
We will use the matrix representation of the Darboux transformation of general form and the theorem on its decomposition into elementary ones (see \cite{Veselov_Shabat_1993}). Potentials $q$ and $\hat q$ are said to be related by a Darboux transformation if there exists a matrix $F$ that is polynomial in $\lambda$ and satisfies the equation
\begin{equation}\label{FQ}
 F'=\widehat QF-FQ,\qquad Q=\begin{pmatrix} 0 & 1 \\ q-\lambda & 0 \end{pmatrix}.
\end{equation}
By writing this equation elementwise, it is easy to show that such a matrix must be of the form
\[
 F=\begin{pmatrix} 
    \dfrac{1}{2}(w-y') & y \\ 
    \dfrac{1}{4y}(w^2-(y')^2-c) & \dfrac{1}{2}(w+y') 
   \end{pmatrix}
\]
where $y$, $w$ and $c=4\det F$ are polynomials in $\lambda$ satisfying the equations
\begin{equation}\label{ywc'}
 2yy''=(y')^2+2(q+\hat q-2\lambda)y^2-w^2+c,\quad w'=(\hat q-q)y,\quad c'=0.
\end{equation}
The first equation implies that the matrix element $F_{21}$ is equal to $\frac{1}{2}(q+\hat q-2\lambda)y-\frac{1}{2}y''$ and is also a polynomial in $\lambda$. The elementary Darboux transformation corresponds to the choice $y=1$, $w=-2f_i(x)$ and $c=4(\lambda-\gamma_i)$, that is, to the matrix
\[
 F_i=\begin{pmatrix} 
    -f_i & 1 \\ 
    f^2_i+\gamma_i-\lambda & -f_i 
   \end{pmatrix},
\]
and moreover, Eqs. (\ref{ywc'}) with $q=q_i$ and $\hat q=q_{i+1}$ are equivalent to the transform (\ref{fq}). A theorem holds that the general Darboux transformation can be represented by a sequence of elementary ones, and any polynomial matrix $F$ satisfying (\ref{FQ}) has the form $F=F_N\cdots F_1$, where $\gamma_i$ are the zeros of $c(\lambda)=4\det F$, taking multiplicities into account. Such a factorization is unique up to the numbering of the zeros. By induction, it is easy to show that
\[
\begin{array}{lll}
 w=-2(f_1+\dots+f_N)\lambda^n+\dotsc,& y=\lambda^n+\dots &\text{ for } N=2n+1,\\
 w=2\lambda^{n+1}+\dotsc, & y= -(f_1+\dots+f_N)\lambda^n+\dots &\text{ for } N=2n+2.
\end{array}
\]
We only need odd $N=2n+1$. Consider the quasiperiodic closure $q_{N+1}=q_1+2\delta$, then, given the first integral $w=2\delta(x-x_0)\lambda^n+\dotsc$ of the system (\ref{ff}), we can set $x_0=0$ without loss of generality. Equations (\ref{ywc'}) with $q=q_1$ and $\hat q=q_{N+1}$ take the form
\[
 2yy''=(y')^2+4(q_1+\delta-\lambda)y^2-w^2+c,\quad 
 w'=2\delta y,\quad 
 c=4(\lambda-\gamma_1)\dots(\lambda-\gamma_{2n+1}).
\]
This coincides with (\ref{ywc}) for $\delta=1$, $q=q_1+\delta$ and $c=4ab$, as required.
\end{proof}

Thus, we have established an implicit but rather interesting correspondence between the system (\ref{yG2}) and the quasiperiodic dressing chain with odd period $2n+1$. A noteworthy circumstance is that the parameters $\alpha_1,\dots,\alpha_n$ and the first integrals $\beta_1,\dots,\beta_{n+1}$ of the system (\ref{yG2}) turn out to be on equal footing in the chain, since both are zeros of the polynomial $c(\lambda)$. Note that they enter the dressing chain only through differences, which reflects the possibility of simultaneously adding an arbitrary constant to $q$ and  $\lambda$ in the Schr\"odinger equation. However, in the Garnier system (\ref{G2}), the potential $q$ is fixed by the formula $q=x^2-2\sum\psi_i\varphi_i$, which fixes $\lambda$ as well and leads to the normalization (\ref{ab}). This is not a limitation, since in the reverse transition from the chain to the Garnier system, we first arbitrarily divide the set of parameters $\gamma_i$ into subsets of $\alpha_i$ and $\beta_i$, and then the condition (\ref{ab}) is provided by adding a suitable constant to all parameters.
 
The question on existence of Garnier type systems corresponding to the dressing chain with even period remains open. As already noted, the cases $N=3$ and $N=4$ correspond to different Painlev\'e transcendents, in contrast to the strictly periodic closure, for which both systems are solved in terms of elliptic functions. Apparently, this means that the even case is not related to the system (\ref{G2}). Perhaps it corresponds to some other non-autonomous generalization, possibly with a term like $x^{-2}$ added to the potential $q$, as the example with $N=2$ suggests.

\section{Kulish--Sklyanin system}\label{s:KS}

The system
\begin{equation}\label{NLS-2}
 {\bf u}_t={\bf u}_{xx}+2({\bf u},{\bf v}){\bf u}-({\bf u},{\bf u}){\bf v},\qquad 
-{\bf v}_t={\bf v}_{xx}+2({\bf u},{\bf v}){\bf v}-({\bf v},{\bf v}){\bf u}
\end{equation}
introduced by Kulish and Sklyanin \cite{Kulish_Sklyanin_1981} is an example of multicomponent generalizations of the NLS equations associated with symmetric spaces \cite{Fordy_Kulish_1983}. Unlike the Manakov system (\ref{NLS-1}), which is invariant under ${\rm GL}(n)$, this system is invariant only under orthogonal transformations $({\bf u},{\bf v})\mapsto(P{\bf u},P{\bf v})$, where $PP^t=I$, which corresponds to generators ${\bf u}_\tau=A{\bf u}$, ${\bf v}_\tau=A{\bf v}$ with skew-symmetric matrices $A$. Otherwise, the classical symmetry group remains the same. In this section, we assume that
\[
 A=-A^t,
\] 
then the formulas (\ref{uv-wave}), (\ref{uv-Gal}) and (\ref{uv-self}) work for reductions of the system (\ref{NLS-2}) as well. Its traveling wave solutions are governed by the autonomous system
\begin{equation}\label{KSGarnier}
 {\bf u}''+2({\bf u},{\bf v}){\bf u}-({\bf u},{\bf u}){\bf v}+(A-k^2){\bf u}=0,\quad 
 {\bf v}''+2({\bf u},{\bf v}){\bf v}-({\bf v},{\bf v}){\bf u}-(A+k^2){\bf v}=0,
\end{equation}
which defines a generalization of the Garnier system to a symmetric space of BDI type \cite{Fordy_Wojciechoski_Marshall_1986}. This is a Liouville integrable system (see \cite[Chapter 25.5]{Suris_2003} for details, in slightly different notation).

The Galilean similarity reduction (\ref{uv-Gal}) leads to the first deformation of the system (\ref{KSGarnier})
\begin{equation}\label{KSGarnier-1}
 {\bf u}''+2({\bf u},{\bf v}){\bf u}-({\bf u},{\bf u}){\bf v}+\frac{1}{2}z{\bf u}+A{\bf u}=0,\quad
 {\bf v}''+2({\bf u},{\bf v}){\bf v}-({\bf v},{\bf v}){\bf u}+\frac{1}{2}z{\bf v}-A{\bf v}=0,
\end{equation}
and the scaling similarity reduction (\ref{uv-self}) leads to the second deformation 
\begin{equation}\label{KSGarnier-2}
 {\bf u}''+2({\bf u},{\bf v}){\bf u}-({\bf u},{\bf u}){\bf v}+\frac{1}{2}(z{\bf u})'+A{\bf u}=0,\quad
 {\bf v}''+2({\bf u},{\bf v}){\bf v}-({\bf v},{\bf v}){\bf u}-\frac{1}{2}(z{\bf v})'-A{\bf v}=0.
\end{equation}
Like systems (\ref{Garnier-1}) and (\ref{Garnier-2}), these systems define vector generalizations of equations \ref{P34} and \ref{P4}, respectively. The autonomous system (\ref{KSGarnier}) admits the standard isospectral Lax representation ${\cal A}'=[{\cal B},{\cal A}]$ with
\begin{gather*}
 {\cal B}=\zeta{\cal B}_0 +{\cal B}_1,\qquad
 {\cal B}_0=\begin{pmatrix}
  -1 & 0 & 0\\
   0 & 0 & 0\\
   0 & 0 & 1
 \end{pmatrix},\qquad
 {\cal B}_1=\begin{pmatrix}
  0 & -{\bf v}^T & 0\\
  {\bf u} &  0   & -{\bf v}\\
  0 &  {\bf u}^T & 0
 \end{pmatrix},\\
 {\cal A}=2(\zeta^2-k^2){\cal B}_0+2\zeta{\cal B}_1+2{\cal B}_2,\qquad
 {\cal B}_2=\begin{pmatrix}
   -({\bf u},{\bf v}) & ({\bf v}')^T & 0\\
   {\bf u}' & {\bf u}{\bf v}^T-{\bf v}{\bf u}^T+A & {\bf v}'\\
   0 & ({\bf u}')^T & ({\bf u},{\bf v}) 
 \end{pmatrix}.
\end{gather*}
The isomonodromic Lax representations (\ref{AB-pair}) for both systems (\ref{KSGarnier-1}) and (\ref{KSGarnier-2}) correspond to the same matrix ${\cal B}=\zeta{\cal B}_0 +{\cal B}_1$; the matrix $\cal A$ for the system (\ref{KSGarnier-1}) is
\[
 {\cal A}=(2\zeta^2+z){\cal B}_0+2\zeta{\cal B}_1+2{\cal B}_2,
\]
and for the system (\ref{KSGarnier-2}) it is
\[
 {\cal A}=(2+z\zeta^{-1})(\zeta{\cal B}_0 +{\cal B}_1)+2\zeta^{-1}{\cal B}_2.
\]

\section{Vector mKdV equations and their isomonodromic reductions}\label{s:mKdV}

The Manakov system admits the third-order symmetry
\begin{equation}\label{NLS-1-t3}
\begin{aligned}
 {\bf u}_T&= {\bf u}_{xxx}+3({\bf u},{\bf v}){\bf u}_x+3({\bf u}_x,{\bf v}){\bf u},\\
 {\bf v}_T&= {\bf v}_{xxx}+3({\bf u},{\bf v}){\bf v}_x+3({\bf u},{\bf v}_x){\bf v},
\end{aligned}
\end{equation}
and the Kulish--Sklyanin system admits the symmetry
\begin{equation}\label{NLS-2-t3}
\begin{aligned}
 {\bf u}_T&={\bf u}_{xxx}+3({\bf u},{\bf v}){\bf u}_x+3({\bf u}_x,{\bf v}){\bf u}-3({\bf u},{\bf u}_x){\bf v},\\
 {\bf v}_T&={\bf v}_{xxx}+3({\bf u},{\bf v}){\bf v}_x+3({\bf u},{\bf v}_x){\bf v}-3({\bf v},{\bf v}_x){\bf u}.
\end{aligned}
\end{equation}
In both cases, the reduction ${\bf v}=-{\bf u}$ is possible, which leads, respectively, to two vector generalizations of the mKdV equation \cite{Athorne_Fordy_1987, Sokolov_Svinolupov_1994}: 
\begin{equation}\label{mKdV-1}
 {\bf u}_t={\bf u}_{xxx}-3({\bf u},{\bf u}){\bf u}_x-3({\bf u},{\bf u}_x){\bf u}
\end{equation}
and
\begin{equation}\label{mKdV-2}
 {\bf u}_t={\bf u}_{xxx}-3({\bf u},{\bf u}){\bf u}_x.
\end{equation}
Both equations are invariant with respect to the one-parameter group of dilatations ${\bf u}(x,t,a)$ $=e^a{\bf u}(e^ax,e^{3a}t)$, as well as to orthogonal transformations of ${\bf u}$. The corresponding self-similar solutions are of the form
\begin{equation}\label{self}
 {\bf u}(x,t)= \varepsilon\tau e^{\log(\tau)A}{\bf u}(z),\quad 
 \tau=t^{-1/3},\quad z=\varepsilon\tau x,\quad 3\varepsilon^3=-1
\end{equation}
where $A$ is an arbitrary skew-symmetric matrix (the multiplier $\varepsilon$ is introduced only for convenience of comparison with the canonical form of the equation \ref{P2}). As the result of this substitution, Eq. (\ref{mKdV-1}) turns into
\begin{equation}\label{P21}
 {\bf u}'''= 3({\bf u},{\bf u}){\bf u}'+3({\bf u},{\bf u}'){\bf u}+z{\bf u}'+{\bf u}+A{\bf u},\quad  A=-A^t,  
\end{equation}
and Eq. (\ref{mKdV-2}) turns into
\begin{equation}\label{P22}
 {\bf u}'''= 3({\bf u},{\bf u}){\bf u}'+z{\bf u}'+{\bf u}+A{\bf u},\quad A=-A^t.
\end{equation}
If ${\bf u}$ is a scalar, then $A=0$, and the equations can be integrated once, resulting in equation \ref{P2} with the parameter $\alpha$ as the integration constant (for (\ref{P22}), an additional scaling is needed which is not essential). For the vector equations, such a reduction of order does not occur, even if we set $A=0$. 

The zero curvature representation (\ref{UV-pair}) for (\ref{mKdV-1}) is given by block matrices of size $(1,n)\times(1,n)$ with an arbitrary parameter $\nu\ne0$:
\begin{gather*}
 {\cal U}=\begin{pmatrix}
   (1-\nu)\lambda & {\bf u}^T \\ 
   {\bf u} & \lambda I_n   
 \end{pmatrix},\\[3mm]
 {\cal V}=\nu^2\lambda^2{\cal U}+{\cal U}_{xx}
 +\nu\lambda\begin{pmatrix}
   ({\bf u},{\bf u}) & -{\bf u}^T_x \\ 
   {\bf u}_x & -{\bf u}{\bf u}^T   
 \end{pmatrix}
 -\begin{pmatrix}
   0 & 2({\bf u},{\bf u}){\bf u}^T \\ 
   2({\bf u},{\bf u}){\bf u} & {\bf u}_x{\bf u}^T-{\bf u}{\bf u}^T_x   
 \end{pmatrix}.
\end{gather*}
For (\ref{mKdV-2}) the matrices are of block size $(1,n,1)\times(1,n,1)$:
\begin{gather*}
 {\cal U}=\begin{pmatrix}
   -\lambda & {\bf u}^T & 0\\ 
    {\bf u} & 0 & {\bf u}\\   
    0 & {\bf u}^T & \lambda 
 \end{pmatrix},\\[3mm]
 {\cal V}=\lambda^2{\cal U}+{\cal U}_{xx}
 +\lambda\begin{pmatrix}
   ({\bf u},{\bf u}) & -{\bf u}^T_x & 0\\ 
   {\bf u}_x & 0 & -{\bf u}_x\\
   0 & {\bf u}^T_x & -({\bf u},{\bf u})\\
 \end{pmatrix}
 -\begin{pmatrix}
   0 & ({\bf u},{\bf u}){\bf u}^T & 0\\ 
   ({\bf u},{\bf u}){\bf u} & 2{\bf u}_x{\bf u}^T-2{\bf u}{\bf u}^T_x & ({\bf u},{\bf u}){\bf u}\\   
   0 & ({\bf u},{\bf u}){\bf u}^T & 0
 \end{pmatrix}.
\end{gather*}
The isomonodromic Lax pairs (\ref{AB-pair}) for (\ref{P21}) and (\ref{P22}) are derived from these zero curvature representations by applying the additional substitution $\lambda=\varepsilon\tau\zeta$, which separates the dependence on $\tau$. For the system (\ref{P21}), this procedure yields the matrices
\begin{equation}\label{P2AB}
 {\cal A}=-\nu^2\zeta(\zeta{\cal B}_0+{\cal B}_1)+z{\cal B}_0-\nu{\cal B}_2 
 -\zeta^{-1}({\cal B}''_1-z{\cal B}_1-{\cal B}_3-{\cal A}_0),\qquad
 {\cal B}=\zeta{\cal B}_0+{\cal B}_1
\end{equation}
where
\begin{equation}\label{P12AB}
\begin{gathered}
 {\cal B}_0=\begin{pmatrix} 1-\nu & 0\\ 0 & I_n \end{pmatrix},\qquad
 {\cal B}_1=\begin{pmatrix} 0 & {\bf u}^T\\ {\bf u} & 0 \end{pmatrix},\qquad
 {\cal B}_2=\begin{pmatrix} ({\bf u},{\bf u}) & -({\bf u}')^T\\ {\bf u}' & -{\bf u}{\bf u}^T \end{pmatrix},\\[3mm]
 {\cal B}_3=\begin{pmatrix} 0 & 2({\bf u},{\bf u}){\bf u}^T\\ 2({\bf u},{\bf u}){\bf u} & {\bf u}'{\bf u}^T-{\bf u}({\bf u}')^T \end{pmatrix},\qquad
 {\cal A}_0=\begin{pmatrix} 0 & 0\\ 0 & A \end{pmatrix}.
\end{gathered}
\end{equation}
Note that if we set $\nu=0$, then in the scalar case the matrix ${\cal A}$ becomes trivial, since the coefficient at $\zeta^{-1}$ turns into a constant matrix due to the first integral (that is, by virtue of the equation \ref{P2}). In the vector case, there is no such a first integral, so the representation formally remains correct for $\nu=0$, but it turns into the degenerate Lax representation for (\ref{P21}) without spectral parameter:
\[
 \tilde{\cal A}'=[\tilde{\cal B},\tilde{\cal A}],\qquad 
 \tilde{\cal A}=\operatorname*{res}_{\zeta=0}{\cal A}
  =-{\cal B}''_1+z{\cal B}_1+{\cal B}_3+{\cal A}_0,\qquad
 \tilde{\cal B}=\tilde{\cal B}_1. 
\]
Characteristic polynomial $\det(\tilde{\cal A}-\mu I_{n+1})$ gives some first integrals for Eq. (\ref{P21}).

For Eq. (\ref{P22}), the matrices ${\cal A}$ and ${\cal B}$ are given by the same formula (\ref{P2AB}) with $\nu=1$ and
\begin{equation}\label{P22AB}
\begin{gathered}
 {\cal B}_0=\begin{pmatrix} -1 & 0 & 0\\ 0 & 0 & 0\\ 0 & 0 & 1 \end{pmatrix},\qquad
 {\cal B}_1=\begin{pmatrix} 0 & {\bf u}^T & 0\\ {\bf u} & 0 & {\bf u}\\ 0 & {\bf u}^T & 0 \end{pmatrix},\qquad
 {\cal B}_2=\begin{pmatrix} 
  ({\bf u},{\bf u}) & -({\bf u}')^T & 0\\
  {\bf u}' & 0 & -{\bf u}'\\ 
  0 & ({\bf u}')^T & -({\bf u},{\bf u})\end{pmatrix},\\[3mm]
 {\cal B}_3=\begin{pmatrix} 
  0 & ({\bf u},{\bf u}){\bf u}^T & 0\\ 
  ({\bf u},{\bf u}){\bf u} & 2{\bf u}'{\bf u}^T-2{\bf u}({\bf u}')^T & ({\bf u},{\bf u}){\bf u}\\
  0 & ({\bf u},{\bf u}){\bf u}^T & 0 
  \end{pmatrix},\qquad
 {\cal A}_0=\begin{pmatrix} 0&0&0\\ 0&A&0\\ 0&0&0 \end{pmatrix}.
\end{gathered}
\end{equation}

\section{Reducing of order in the fully isotropic case}\label{s:A0}

Using the examples from the previous section, we demonstrate that in the completely degenerate case $A=0$, the systems can be restricted to invariants of the group ${\rm O}(n)$, which leads to equations of a fixed order regardless of the dimension $n$. For vector equations of the third order, it is sufficient to consider the variables
\[
 p=({\bf u},{\bf u}),\qquad q=({\bf u}',{\bf u}').
\]
It is easy to verify that, by virtue of Eq. (\ref{P21}) with $A=0$, $p$ and $q$ satisfy the system
\begin{equation}\label{P21-pq}
 p'''= 6pp'+3q'+zp'+2p,\qquad
 q'''= 12pq'+6p'p''+3p''+4zq'-2q.
\end{equation}
Similarly, equation (\ref{P22}) leads to the system
\begin{equation}\label{P22-pq}
 p'''= 3pp'+3q'+zp'+2p,\qquad
 q'''= 12pq'+6p'q+3p''+4zq'-2q.
\end{equation}

\begin{remark}
The system (\ref{P21-pq}) is related to a self-similar reduction of one of the Drinfeld--Sokolov systems \cite{Drinfeld_Sokolov_1985}
\begin{equation}\label{DS-III}
 f_t=f_{xxx}-6ff_x+12g_x,\qquad g_t=-2g_{xxx}+6fg_x.
\end{equation}
Namely, it is not difficult to verify that the latter system admits solutions of the form
\[
 f(x,t)= \varepsilon^2\tau^2f(z),\qquad g(x,t)= \varepsilon^4\tau^4g(z),
\]
where $\tau$, $z$, and $\varepsilon$ are defined as in (\ref{self}) and $f(z)$, $g(z)$ satisfy the equations
\[
 f'''=6ff'-12g'+zf'+2f,\qquad g'''=3fg'-\frac{1}{2}zg'-2g.
\]
This system is connected with (\ref{P21-pq}) by the invertible transformation
\[
 2f=4p+z,\qquad 2g=-q+p^2+zp+z^2/4. 
\]
For Eqs. (\ref{P22-pq}), we were unable to find a similar interpretation as a reduction of any two-component integrable PDE system.
\end{remark}

Below we demonstrate that each of the systems (\ref{P21-pq}) and (\ref{P22-pq}) passes the Kowalevskaya--Painlev\'e test, admits an isomonodromic Lax representation, and has a pair of polynomial first integrals. 

\begin{proposition}
Both systems (\ref{P21-pq}) and (\ref{P22-pq}) possess formal Laurent solutions of the form 
$$
p=\sum_{i=-2}^{\infty} \alpha_i (z-a)^i, \qquad q=\sum_{i=-4}^{\infty} \beta_i (z-a)^i.
$$
with six arbitrary parameters.
\end{proposition}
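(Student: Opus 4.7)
The plan is a direct application of the Kowalevskaya--Painlev\'e algorithm in three steps.

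\textbf{Step 1 (dominant balance).} Setting $\tau=z-a$, I substitute $p\sim\alpha\tau^{-2}$, $q\sim\beta\tau^{-4}$ into the dominant parts of each system; the linear terms $zp'$, $2p$, $3p''$, $4zq'$ and $2q$ are automatically subleading at orders $\tau^{-5}$ and $\tau^{-7}$. Equating the two leading coefficients yields an algebraic system for $(\alpha,\beta)$, and I pick the principal branches $(\alpha,\beta)=(1,1)$ for (\ref{P21-pq}) and $(\alpha,\beta)=(2,2)$ for (\ref{P22-pq}). (The other branches of the dominant balance either have a negative integer resonance, and so contribute lower-dimensional families, or lead to special solutions that are not relevant for the general-position count.)

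\textbf{Step 2 (resonance polynomial).} Perturbing the leading ansatz by $(P\tau^{r-2},Q\tau^{r-4})$ and linearising in $(P,Q)$, I obtain a $2\times 2$ matrix $M(r)$ whose kernel corresponds to a new arbitrary constant appearing at order $r$. A routine computation of $\det M(r)$ gives sextics
\[
 \det M(r) = (r+1)(r-3)(r-4)^2(r-6)(r-8) \quad\text{for (\ref{P21-pq})},
\]
\[
 \det M(r) = (r+1)(r-2)(r-3)(r-4)(r-6)(r-10) \quad\text{for (\ref{P22-pq})}.
\]
The root $r=-1$ accounts for the arbitrariness of the pole location $a$; the remaining roots are non-negative integers, so (counted with multiplicity, in particular the double resonance $r=4$ in the first case) the expected number of six free parameters is obtained.

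\textbf{Step 3 (recursion and compatibility).} Insert the full series into each system. At every order $k$ the coefficient equations take the form $M(k)(\alpha_k,\beta_k)^T=\mathbf{R}_k$, where $\mathbf{R}_k$ is a known polynomial expression in the earlier coefficients and in $a$ (the $a$-dependence coming from the subdominant terms). For $k$ not a resonance, $M(k)$ is invertible and $(\alpha_k,\beta_k)$ is uniquely determined; for each positive-integer resonance it must be verified that $\mathbf{R}_k$ lies in the image of $M(k)$, so that a free parameter truly appears and no logarithmic term is forced.

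The main technical obstacle is the double resonance $r=4$ in system (\ref{P21-pq}): the matrix $M(4)$ vanishes identically, so the full vector $\mathbf{R}_4$ must vanish in order that both $\alpha_4$ and $\beta_4$ become free simultaneously. I would check this by explicitly running the recursion to compute $\alpha_{-2},\dots,\alpha_3$ and $\beta_{-4},\dots,\beta_3$ and substituting into the two scalar compatibility conditions. For the simple resonances in both systems the compatibility is a single scalar identity in the earlier coefficients, which is straightforwardly verified by the same recursive computation.
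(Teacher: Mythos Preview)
Your approach is exactly the one the paper uses: a direct Kowalevskaya--Painlev\'e computation, and your resonance polynomials $(r+1)(r-3)(r-4)^2(r-6)(r-8)$ and $(r+1)(r-2)(r-3)(r-4)(r-6)(r-10)$ reproduce precisely the free parameters the paper records ($a,\alpha_1,\alpha_2,\alpha_4,\alpha_6,\beta_0$ for (\ref{P21-pq}) and $a,\alpha_0,\alpha_1,\alpha_2,\alpha_4,\alpha_8$ for (\ref{P22-pq})), as well as the secondary branch $(\alpha_{-2},\beta_{-4})=(5,-15)$ with only four parameters. The paper's proof is simply the terse assertion ``straightforward calculation proves'', so your write-up is in fact more explicit; just be careful that in Step~3 your $(\alpha_4,\beta_4)$ refers to the recursion level $k=4$, i.e.\ to the Laurent coefficients $\alpha_2$ and $\beta_0$ in the indexing of the Proposition.
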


\begin{proof}
Straightforward calculation proves that for system (\ref{P21-pq}) there are two branches $\alpha_{-2}=1$, $\beta_{-4}=1$ and $\alpha_{-2}=5$, $\beta_{-4}=-15$. For the first branch, the parameters $a$, $\alpha_1$, $\alpha_2$, $\alpha_4$, $\alpha_6$, and $\beta_0$ are free and form a maximal set. There are only four free parameters in the second branch: $a$, $\alpha_4$, $\alpha_6$, and $\beta_0$.    

For (\ref{P22-pq}), the main branch is $\alpha_{-2}=2$, $\beta_{-4}=2$ and it also contains six free parameters $a$, $\alpha_0$, $\alpha_1$, $\alpha_2$, $\alpha_4$, and $\alpha_8$.  
\end{proof}

\begin{proposition} 
i) System (\ref{P21-pq}) admits the isomonodromic Lax representation (\ref{AB-pair}), where
\begin{equation}\label{AB-P21-pq}
 {\cal A}= -\nu^2\zeta(\zeta{\cal A}_2-{\cal A}_1)
  +\nu{\cal A}_0+zI_4+\frac{1}{\zeta}{\cal A}_{-1},\qquad
 {\cal B}=\begin{pmatrix}
  (1-\nu)\zeta & 1 & 0 & 0\\
   p & \zeta & 1 & 0\\
   \frac{1}{2}p' & 0 & \zeta & 1 \\
   \frac{1}{2}p''-q & \frac{3}{2}p'+1 & 3p+z &\zeta
\end{pmatrix},
\end{equation}
\begin{gather*}
 {\cal A}_2=\begin{pmatrix}
  1-\nu & 0 & 0 & 0\\
   0 & 1 & 0 & 0\\
   0 & 0 & 1 & 0 \\
   0 & 0 & 0 & 1
  \end{pmatrix},\quad
 {\cal A}_1=\begin{pmatrix}
   0 & -1 & 0 & 0\\
   -p & 0 & 0 & 0\\
   -\frac{1}{2}p' & 0 & 0 & 0 \\
   q-\frac{1}{2}p'' & 0 & 0 & 0
  \end{pmatrix},\quad
 {\cal A}_0=\begin{pmatrix}
   -p-z & 0 & 1 & 0\\
   -\frac{1}{2}p' & p & 0 & 0\\
   -q & \frac{1}{2}p' & 0 & 0 \\
   -\frac{1}{2}q' & \frac{1}{2}p''-q & 0 & 0
  \end{pmatrix},\\[3mm]
 {\cal A}_{-1}=\begin{pmatrix}
   0 & 2p+z & 0 & -1 \\
 -\frac{1}{2}p''+2p^2+zp+q & \frac{1}{2}p' & -p & 0 \\
 \frac{1}{2}zp'+pp'-\frac{1}{2}q' & q & -\frac{1}{2}p' & 0 \\
 \frac{1}{2}zp''+pp''+\frac{3}{4}(p')^2+\frac{1}{2}p'+pq-\frac{1}{2}q'' & \frac{1}{2}q' & q-\frac{1}{2}p'' & 0 \\
  \end{pmatrix}.
\end{gather*}
ii) System (\ref{P21-pq}) has the following first integrals:
\begin{gather*}
H_1=q'' +2zq-2(2p+z)p''-p'^2-p'+2p(2p+z)^2,\\
H_2= 4q(p''-2q)^2-4q'(p'p''-2p'q-pq')+(p'^2-4pq)(2q''-3p'^2-2p'-12pq-4zq).
\end{gather*}
\end{proposition}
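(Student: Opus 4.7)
The plan is to verify both assertions by direct computation after organizing ${\cal A}$ and ${\cal B}$ as Laurent polynomials in $\zeta$.

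\textbf{Part (i).} I would write ${\cal B}=\zeta{\cal A}_2+{\cal B}_0$, where ${\cal B}_0:={\cal B}\bigr|_{\zeta=0}$; then ${\cal B}_\zeta={\cal A}_2$, a constant diagonal matrix. Since $\cal A$ contains only the powers $\zeta^{-1},\zeta^0,\zeta,\zeta^2$ while $\cal B$ is linear in $\zeta$, the Lax equation ${\cal A}'={\cal B}_\zeta+[{\cal B},{\cal A}]$ decomposes into five $4\times 4$ matrix identities, one at each of the powers $\zeta^{-1},\zeta^0,\zeta,\zeta^2,\zeta^3$. The $\zeta^3$ identity reduces to $[{\cal A}_2,{\cal A}_2]=0$. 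The $\zeta^2$ and $\zeta^1$ identities are purely algebraic commutator relations between ${\cal A}_2$ and the matrices ${\cal A}_1,{\cal A}_0,{\cal B}_0$, and they follow directly from the observation that the first row and first column of ${\cal A}_1+{\cal B}_0$ vanish together with analogous structural properties of the other constant blocks; note that ${\cal A}_2$ commutes with every matrix whose first row and first column vanish, since its diagonal differs only in the $(1,1)$ entry. The essential dynamical content — precisely the system (\ref{P21-pq}) — is then encoded in the remaining identities at $\zeta^0$ and $\zeta^{-1}$.

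\textbf{Part (ii).} The $\zeta^{-1}$ component of the Lax equation reads
\[
 {\cal A}_{-1}'=[{\cal B}_0,{\cal A}_{-1}],
\]
so ${\cal A}_{-1}$ undergoes a purely conjugative (isospectral) evolution. Consequently every coefficient of the characteristic polynomial $\det(\mu I-{\cal A}_{-1})$, equivalently every trace $\operatorname{tr}({\cal A}_{-1}^k)$, is a first integral of (\ref{P21-pq}). Inspection of the diagonal of ${\cal A}_{-1}$ shows that $\operatorname{tr}{\cal A}_{-1}=0$, so the potentially nontrivial polynomial invariants are $\operatorname{tr}({\cal A}_{-1}^2)$, $\operatorname{tr}({\cal A}_{-1}^3)$, and $\operatorname{tr}({\cal A}_{-1}^4)$ (equivalently $\det{\cal A}_{-1}$). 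Direct expansion should identify $H_1$, up to a scalar, with $\operatorname{tr}({\cal A}_{-1}^2)$, and $H_2$ with a polynomial combination of these invariants; an independent a~posteriori check is simply to differentiate the stated formulas for $H_1,H_2$, substitute $p''',q'''$ from (\ref{P21-pq}), and confirm that the result collapses to zero.

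\textbf{Expected obstacle.} The conceptual steps are routine, but the bookkeeping is heavy: computing $[{\cal B}_0,{\cal A}_{-1}]$ entry by entry and matching it against ${\cal A}_{-1}'$ involves expressions in $p,p',p'',p''',q,q',q''$ with numerous fractional coefficients, and the $\zeta^0$ identity additionally mixes $[{\cal A}_2,{\cal A}_{-1}]$ with $[{\cal B}_0,\nu{\cal A}_0+zI_4]$ and the non-commutator contribution ${\cal B}_\zeta={\cal A}_2$. Careful tracking of the $\nu$-factors and the asymmetric $\tfrac12$-coefficients in ${\cal A}_{-1}$ is required. In practice this verification is best performed in a computer algebra system; the substance of the proposition lies in having produced the matrices ${\cal A},{\cal B}$ and the invariants $H_1,H_2$ in the first place, not in checking them.
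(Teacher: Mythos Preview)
Your treatment of part (ii) matches the paper's: the residue ${\cal A}_{-1}$ satisfies ${\cal A}_{-1}'=[{\cal B}|_{\zeta=0},{\cal A}_{-1}]$, and the paper extracts $H_1,H_2$ from the characteristic polynomial $\det({\cal A}_{-1}-\mu I_4)=\mu^4-\tfrac12\mu^2H_1+\tfrac{1}{16}H_2$. (Note in passing that this even structure forces $\operatorname{tr}({\cal A}_{-1}^3)=0$, so your list of potentially nontrivial traces is slightly too long.)

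For part (i) your approach is correct but genuinely different from the paper's. You propose a direct verification: split the Lax equation by powers of $\zeta$ and check five $4\times4$ identities. The paper instead \emph{derives} the $4\times4$ pair from the $(1,n)\times(1,n)$ vector Lax pair (\ref{P2AB}), (\ref{P12AB}) for (\ref{P21}) with $A=0$. Writing the linear system ${\bf\Psi}'={\cal B}{\bf\Psi}$, ${\bf\Psi}_\zeta={\cal A}{\bf\Psi}$ with ${\bf\Psi}=(\psi,\Psi)$, $\psi\in\mathbb{R}$, $\Psi\in\mathbb{R}^n$, one introduces the scalars $\varphi_k=({\bf u}^{(k)},\Psi)$ for $k=0,1,2$ and checks that $(\psi,\varphi_0,\varphi_1,\varphi_2)$ obeys a closed system; the matrices of that system are exactly the displayed $4\times4$ ${\cal A},{\cal B}$. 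This buys an explanation of where the matrices come from and why the reduced pair is automatically compatible (it is a projection of an already-compatible pair), whereas your route buys self-containedness at the cost of the heavy bookkeeping you acknowledge.
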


\begin{proof}
Let us demonstrate how the matrices (\ref{AB-P21-pq}) are derived from the matrices ${\cal A}$ and ${\cal B}$ corresponding to the vector equation (\ref{P21}), that is, from the matrices (\ref{P2AB}) and (\ref{P12AB}) of the size $(1,n)\times(1,n)$. First, we write down the equations
\[
 {\bf\Psi}'={\cal B}{\bf\Psi},\quad
 {\bf\Psi}_\zeta={\cal A}{\bf\Psi},\quad 
 {\bf\Psi}=\begin{pmatrix} \psi \\ \Psi \end{pmatrix},\quad \psi\in\mathbb{R},\quad \Psi\in\mathbb{R}^n
\]
(assuming that $A=0$). This gives the following systems of linear equations with respect to the scalar function $\psi$ and the vector function $\Psi$:
\begin{align*}
 &\left\{\begin{aligned}
  \psi'&= (1-\nu)\zeta\psi+({\bf u},\Psi),\\
  \Psi'&= \zeta\Psi+\psi {\bf u},
  \end{aligned}\right.\\
 &\left\{\begin{aligned} 
  \psi_\zeta&= \bigl(\nu^2(\nu-1)\zeta^2-(\nu-1)z-\nu p\bigr)\psi\\
    &\qquad +\zeta^{-1}\bigl(2p+z-\nu^2\zeta^2\bigr)(u,\Psi)+\nu({\bf u}',\Psi)-\zeta^{-1}({\bf u}'',\Psi),\\ 
  \Psi_\zeta&= \psi\bigl(\zeta^{-1}(2p+z-\nu^2\zeta^2){\bf u}-\nu {\bf u}'-\zeta^{-1}{\bf u}''\bigr)\\
    &\qquad +({\bf u},\Psi)\bigl(\nu {\bf u}+\zeta^{-1}{\bf u}'\bigr)
      -\zeta^{-1}({\bf u}',\Psi){\bf u} +\bigl(z-\nu^2\zeta^2\bigr)\Psi,
  \end{aligned}\right.
\end{align*}
where we denote $p=({\bf u},{\bf u})$, as before. Let
\[
 \varphi_0=({\bf u},\Psi),\quad \varphi_1=({\bf u}',\Psi),\quad \varphi_2=({\bf u}'',\Psi).
\]
Straightforward calculation leads to closed systems for $\psi$ and these quantities. The system with respect to $z$ is of the form
\[
 \left\{\begin{aligned}
  \psi'&= (1-\nu)\zeta\psi+\varphi_0,\\
  \varphi'_0&= p\psi+\zeta\varphi_0+\varphi_1,\\
  \varphi'_1&= \tfrac{1}{2}p'\psi+\zeta\varphi_1+\varphi_2,\\
  \varphi'_2&= (\tfrac{1}{2}p''-q)\psi+(\tfrac{3}{2}p'+1)\varphi_0+(3p+z)\varphi_1+\zeta\varphi_2,
 \end{aligned}\right.
\]
and a more cumbersome system for derivatives of the vector $(\psi,\varphi_0,\varphi_1,\varphi_2)$ with respect to $\zeta$ is obtained in similar way. Writing these systems in the matrix form, we arrive at the matrices (\ref{AB-P21-pq}).

The matrices ${\cal A}_{-1}=\operatorname*{res}_{\zeta=0}{\cal A}$ and $\tilde{\cal B}={\cal B}|_{\zeta=0}$ satisfy the degenerate Lax equation ${\cal A}'_{-1}=[\tilde{\cal B},{\cal A}_{-1}]$ without spectral parameter; from here it follows that the coefficients of the characteristic polynomial  
\[
 \det({\cal A}_{-1}-\mu I_4)=\mu^4-\frac{\mu^2}{2}H_1+\frac{1}{16}H_2
\]
are first integrals of (\ref{P21-pq}). It is easy to check that they have the form given in the statement.
\end{proof}

\begin{proposition} 
i) System (\ref{P22-pq}) admits an isomonodromic Lax representation (\ref{AB-pair}), where  
\begin{gather*}
 {\cal A}={\cal A}_0+\frac{1}{\zeta}{\cal A}_{-1},\qquad
 {\cal A}_{-1}=
 \begin{pmatrix}
  0 & 0 & p+z & 0 & -1 \\
  0 & 0 & 0 & 0 & 0 \\
  -p''+2p(p+z)+2q & 0 & p' & -2p & 0 \\
  p'(p+z)-q' & 0 & 2q & -p' & 0 \\
  p''(p+z)+p'+4pq-q'' & 0 & q' & 2q-p'' & 0 
 \end{pmatrix},\\[3mm]
 {\cal A}_0=
 \begin{pmatrix}
  0 & p+z-\zeta^2 & -\zeta & 0 & 0 \\
  p+z-\zeta^2 & 0 & 0 & -1 & 0 \\
 -2p\zeta & p' & 0 & 0 & 0 \\
 -p'\zeta & 2q & 0 & 0 & 0 \\
 (2q-p'')\zeta & q' & 0 & 0 & 0 
 \end{pmatrix},\qquad
 {\cal B}=
 \begin{pmatrix}
  0 & \zeta  & 1 & 0 & 0 \\
  \zeta & 0 & 0 & 0 & 0 \\
  2p & 0 & 0 & 1 & 0 \\
  p' & 0 & 0 & 0 & 1 \\
  p''-2q & 0 & 1 & 3p+z & 0
 \end{pmatrix}.
\end{gather*}
ii) The system admits the first integrals 
\begin{gather*}
H_1=q''-6pq+2zq-2(p+z)p''+p'^2-p'+2p(p+z)^2,\\
H_2= 2q(p''-2q)^2-2q'(p'p''-2p'q-pq')+(p'^2-4pq)(q''-p'-6pq-2zq).
\end{gather*}
\end{proposition}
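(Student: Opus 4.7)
\emph{Proof plan.} The argument parallels the one just used for system (\ref{P21-pq}). Since (\ref{P22-pq}) is the ${\rm O}(n)$-invariant projection of the vector equation (\ref{P22}) with $A=0$ (via $p=({\bf u},{\bf u})$, $q=({\bf u}',{\bf u}')$), and (\ref{P22}) carries the isomonodromic Lax pair (\ref{P2AB}) with the $(1,n,1)\times(1,n,1)$ blocks (\ref{P22AB}), the plan for part (i) is to reduce this vector linear problem to a $5\times5$ scalar one. Writing the auxiliary wave function as ${\bf\Psi}=(\psi_0,\Psi,\psi_2)^T$, with $\psi_0,\psi_2\in\mathbb{R}$ and $\Psi\in\mathbb{R}^n$, I introduce
\[
 \varphi_0=({\bf u},\Psi),\qquad \varphi_1=({\bf u}',\Psi),\qquad \varphi_2=({\bf u}'',\Psi),
\]
and compute the $x$- and $\zeta$-derivatives of the quintuple $(\psi_0,\psi_2,\varphi_0,\varphi_1,\varphi_2)$ using ${\bf\Psi}'={\cal B}{\bf\Psi}$ and ${\bf\Psi}_\zeta={\cal A}{\bf\Psi}$. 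All coefficients that appear are polynomial in $p,q$ and their derivatives, except for a single term $({\bf u}''',\Psi)$ produced in $\varphi'_2$ and in $\varphi_{2,\zeta}$. By Eq. (\ref{P22}) with $A=0$, ${\bf u}'''=(3p+z){\bf u}'+{\bf u}$, so this term equals $(3p+z)\varphi_1+\varphi_0$, and after this substitution the system closes on the five scalars. Reading off the coefficients produces the $5\times5$ matrices ${\cal B}$, ${\cal A}_0$ and ${\cal A}_{-1}$ in the statement.

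For part (ii), I apply the same residue argument as for (\ref{P21-pq}). Extracting the coefficient of $\zeta^{-1}$ in the compatibility condition (\ref{AB-pair}) gives the spectral-parameter-free Lax equation
\[
 {\cal A}'_{-1}=[\tilde{\cal B},{\cal A}_{-1}],\qquad \tilde{\cal B}={\cal B}\big|_{\zeta=0},
\]
and hence every coefficient of $\det({\cal A}_{-1}-\mu I_5)$ is a first integral of (\ref{P22-pq}). Since the second row and the second column of ${\cal A}_{-1}$ vanish, $\mu$ is a factor of this characteristic polynomial, leaving a quartic factor in $\mu$ whose two non-trivial non-constant coefficients, after routine polynomial rearrangement, turn out to be proportional to the expressions $H_1$ and $H_2$ in the statement. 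As an independent short check, conservation of these two quantities can also be verified directly by differentiation modulo (\ref{P22-pq}).

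The main obstacle I expect is purely bookkeeping at the derivation stage: one has to track the three $(1,n,1)$ blocks of the original vector Lax pair through the differentiation of each $\varphi_k$ and confirm that the only scalar product outside the chosen list that arises is $({\bf u}''',\Psi)$, which is then absorbed using (\ref{P22}). Once this step is completed, the identification of matrix entries with ${\cal A}_0$, ${\cal A}_{-1}$ and ${\cal B}$, and the subsequent extraction of $H_1,H_2$ from the characteristic polynomial, are mechanical.
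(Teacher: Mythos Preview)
Your plan is correct and coincides with the paper's approach: the paper derives the $5\times5$ Lax pair for (\ref{P22-pq}) from (\ref{P2AB}), (\ref{P22AB}) by the same scalar-projection procedure used in the preceding proposition (with the wave function split as $(\psi_0,\Psi,\psi_2)^T$ and the invariants $\varphi_k=({\bf u}^{(k)},\Psi)$), and then obtains $H_1,H_2$ from the parameter-free Lax equation ${\cal A}'_{-1}=[\tilde{\cal B},{\cal A}_{-1}]$ via $\det({\cal A}_{-1}-\mu I_5)=-\mu^5+\mu^3H_1-\mu H_2$. Your observation that the vanishing second row and column of ${\cal A}_{-1}$ force the factor $\mu$ is exactly what underlies this factorization.
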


\begin{proof}
The isomonodromic Lax representation for Eqs. (\ref{P22-pq}) is derived from Eqs. (\ref{P2AB}) and (\ref{P22AB}) in the same way as it was done in the proof of the previous Proposition. 

As before, the matrices ${\cal A}_{-1}$ and $\tilde{\cal B}={\cal B}|_{\zeta=0}$ satisfy the Lax equation ${\cal A}'_{-1}=[\tilde{\cal B},{\cal A}_{-1}]$ without spectral parameter, and the coefficients of the characteristic polynomial
\[
\det({\cal A}_{-1}-\mu I_5)=-\mu^5+\mu^3H_1-\mu H_2
\]
provide the first integrals of the system.
\end{proof}

Apparently, using the existence of pairs of integrals, the systems \eqref{P21-pq} and \eqref{P22-pq}  can be reduced to fourth-order equations contained in Cosgrove's classification \cite{Cosgrove_2000a}. However, this requires separate non-trivial calculations and is beyond the scope of this paper. 

\section{Vector KdV equation and generalizations of H\'enon–Heiles systems}\label{s:KdV}

In both third-order systems (\ref{NLS-1-t3}) and (\ref{NLS-2-t3}), the reduction ${\bf v}=-{\bf c}=\operatorname{const}$ is possible, which leads to two vector generalizations of the KdV equation:
\begin{equation}\label{KdVc-1}
 {\bf u}_t={\bf u}_{xxx}-3({\bf c},{\bf u}){\bf u}_x-3({\bf c},{\bf u}_x){\bf u} 
\end{equation}
and
\begin{equation}\label{KdVc}
 {\bf u}_t={\bf u}_{xxx}-3({\bf c},{\bf u}){\bf u}_x
  -3({\bf c},{\bf u}_x){\bf u}+3({\bf u},{\bf u}_x){\bf c}.    
\end{equation}
Here it is convenient to assume that the vectors ${\bf c}$ and ${\bf u}$ belong to ${\mathbb R}^{n+1}$, and not to ${\mathbb R}^n$, as usual.

Taking into account orthogonal transformations, the constant vector ${\bf c}$ can be set to ${\bf c}=(1,0,\dots, 0)$ without losing generality. If we then denote ${\bf u}=(u,v_1,\dots,v_n)$, then each of Eqs. (\ref{KdVc-1}) and (\ref{KdVc}) turns into a system for the scalar variable $u$ and the vector variable ${\bf v}$. The first system takes the form
\[
 u_t=u_{xxx}-6uu_x,\qquad 
 {\bf v}_t={\bf v}_{xxx}-3u{\bf v}_x-3u_x{\bf v},\qquad 
 u\in{\mathbb R},~~{\bf v}\in{\mathbb R}^n,
\]
that is, the usual KdV equation for $u$ is detached and linear equations for ${\bf v}$ arise on its background. This is not very interesting, and we will no longer consider this system. The second equation takes the form
\begin{equation}\label{KdV}
 u_t=u_{xxx}-3uu_x+3({\bf v},{\bf v}_x),\qquad 
 {\bf v}_t={\bf v}_{xxx}-3u{\bf v}_x-3u_x{\bf v},\qquad 
 u\in{\mathbb R},~~{\bf v}\in{\mathbb R}^n,
\end{equation}
where the scalar and vector components interact in a non-trivial way. It is this system that we will call the vector KdV equation. A zero curvature representation for it can be written in block matrices of size $(1,1,n,1)\times(1,1,n,1)$:
\begin{equation}\label{KdV-UV}
\begin{gathered}
{\cal V}=\lambda^2{\cal U}+\lambda{\cal U}_1+{\cal U}_0,\quad
{\cal U}=\begin{pmatrix}
 -\lambda & 1 & 0 & 0 \\ 
  u & 0 & 0 & 1\\ 
  {\bf v} & 0 & 0 & 0\\ 
  0 & u & {\bf v}^T & \lambda    
 \end{pmatrix},\quad
 {\cal U}_1=\begin{pmatrix}
  u & 0 & 0 & 0 \\
  u_x & 0 & {\bf v}^T & 0 \\ 
  {\bf v}_x & -{\bf v} & 0 & 0 \\ 
  0 & u_x & {\bf v}^T_x & -u 
 \end{pmatrix},\\
 {\cal U}_0=\begin{pmatrix}
  u_x & -u & {\bf v}^T & 0\\ 
  u_{xx}-u^2+({\bf v},{\bf v}) & 0 & {\bf v}^T_x & -u\\ 
  {\bf v}_{xx}-2u{\bf v} & -{\bf v}_x & 0 & {\bf v} \\   
  0 & u_{xx}-u^2+({\bf v},{\bf v}) & {\bf v}^T_{xx}-2u{\bf v}^T & -u_x  
 \end{pmatrix}. 
\end{gathered}
\end{equation}

\paragraph{Traveling wave solutions.} It is easy to check that substitutions
\[
 u(x,t)= u(z),\quad {\bf v}(x,t)=e^{tA}{\bf v}(z),\quad z=x-\gamma t,\quad A=-A^T
\]
reduce (\ref{KdV}) to equations
\[
 u'''-3 u u'+3({\bf v},{\bf v}')+\gamma u'=0,\qquad 
 {\bf v}'''-3u{\bf v}'-3u'{\bf v}+\gamma {\bf v}'-A{\bf v}=0.
\]
The change $u\mapsto u+\gamma/3$ allows us to put $\gamma=0$, and we arrive at the system
\begin{equation}\label{KdV-wave}
 u''=\frac{3}{2}u^2-\frac{3}{2}({\bf v},{\bf v})+\beta,\qquad 
 {\bf v}'''=3u{\bf v}'+3u'{\bf v}+A{\bf v}, \qquad u\in{\mathbb R},~~{\bf v}\in{\mathbb R}^n,
\end{equation}
which can be regarded as a generalization of the H\'enon–Heiles system \eqref{scalhh} with $\sigma=\frac{1}{2}$. The reason is that if $A=0$ then the second equation integrates and we obtain
\begin{equation}\label{zeroA}
 u''=\frac{3}{2}u^2-\frac{3}{2}({\bf v},{\bf v})+\beta,\qquad 
 {\bf v}''=3u{\bf v} +{\bf b}.
\end{equation}
We are not aware of any studies of this system for ${\bf b}\ne 0$ in the literature, even in the scalar case.

\paragraph{Galilean reduction.} The vector KdV equation (\ref{KdVc}) admits the one-parameter group of Galilean transformations of the form 
\[
 {\bf u}\mapsto {\bf u}+\tau {\bf c},\qquad x\mapsto x-3\tau({\bf c},{\bf c})t.
\]
By passing to its invariants and taking into account orthogonal transformations which preserve ${\bf c}$, we obtain the substitution
\[
 {\bf u}(x,t)=e^{tB}{\bf u}(z)+t{\bf c},\quad z=x-\frac{3}{2}({\bf c},{\bf c})t^2,\quad B+B^T=0,\quad B{\bf c}=0,
\]
which reduces (\ref{KdVc}) to the equation
\begin{equation}\label{KdVc-Gal}
 {\bf u}'''=3({\bf c},{\bf u}){\bf u}'+3({\bf c},{\bf u}'){\bf u}-3({\bf u},{\bf u}'){\bf c}+B{\bf u}+{\bf c},\qquad 
 {\bf u},{\bf c}\in{\mathbb R}^{n+1}. 
\end{equation}
For the KdV equation in the mixed scalar-vector form (\ref{KdV}), this substitution takes the form
\[
 u(x,t)=u(z)+t,\quad {\bf v}(x,t)=e^{tA}{\bf v}(z),\quad z=x-\frac{3}{2}t^2,
\]
and the reduced system takes the form (after integrating the equation for the scalar component)
\begin{equation}\label{KdV-Gal}
 u''=\frac{3}{2}u^2-\frac{3}{2}({\bf v},{\bf v})+z,\quad 
 {\bf v}'''=3u{\bf v}'+3u'{\bf v}+A{\bf v},\quad 
 u\in{\mathbb R},~~ {\bf v}\in{\mathbb R}^n,~~ A+A^T=0.
\end{equation}
This system can be regarded as a kind of vector generalization of the \ref{P1} equation which appears (up to a scaling) if ${\bf v}=0$.

If we do not use orthogonal transformations, that is, consider zero matrices $B$ and $A$, then the equation for the vector component is also integrated and we arrive at the equations
\begin{equation}\label{KdVc-Gal0}
 {\bf u}''=3({\bf c},{\bf u}){\bf u}-\frac{3}{2}({\bf u},{\bf u}){\bf c}+z{\bf c}+{\bf b},\qquad 
 {\bf u},{\bf c},{\bf b}\in{\mathbb R}^{n+1}
\end{equation}
or, in the scalar-vector form,
\begin{equation}\label{KdV-Gal0}
 u''=\frac{3}{2}u^2-\frac{3}{2}({\bf v},{\bf v})+z,\qquad 
 {\bf v}''=3u{\bf v}+{\bf b},\qquad u\in{\mathbb R},~~ {\bf v},{\bf b}\in{\mathbb R}^n.
\end{equation}
Notice that system (\ref{KdV-Gal0}) is Hamiltonian with respect to brackets
\[
 \{u,u'\}=1,\quad \{v_i,v'_i\}=-1
\]
and the Hamiltonian
\[
 H=\frac{1}{2}(u')^2-\frac{1}{2}({\bf v}',{\bf v}')-\frac{1}{2}u^3
   +\frac{3}{2}u({\bf v},{\bf v})+({\bf b},{\bf v})-zu.
\]
The remaining orthogonal transformations make possible to bring ${\bf b}$ either to the form ${\bf b}=0$ or ${\bf b}=(1,0,\dots,0)$. In the first case, the system (\ref{KdV-Gal0}) is reduced to two second-order equations for $u$ and $g=({\bf v},{\bf v})$ (cf. with Eqs. (3.46) and (3.47) from \cite{Cosgrove_2000a})
\begin{equation}\label{KdV-Gal00}
 u''=\frac{3}{2}u^2-\frac{3}{2}g+z,\qquad 2gg''-(g')^2=12ug^2+\gamma.
\end{equation}
In the second case, we again detach the first coordinate of the vector: let ${\bf v}=(v_1,v_2,\dots,v_n)$ and $g=\sum_{i>1}v^2_i$, then the system appears
\begin{equation}\label{KdV-Gal01}
 u''=\frac{3}{2}u^2-\frac{3}{2}(v^2_1+g)+z,\quad 
 v''_1=3u v_1+1,\quad 
 2gg''-(g')^2=12ug^2+\gamma.
\end{equation}

\begin{remark} If $n=3$, system (\ref{KdV-Gal0}) with  ${\bf v}=(v_1,v_2,v_3)$, ${\bf b}=(b_1,b_2,b_3)$ coincides with the matrix P$_1$ equation \cite{Balandin_Sokolov_1998}
\[
 U''= \frac{3}{2}U^2+zI_2+B, 
\]
where 
\[
 U=\begin{pmatrix}
 u+iv_1 & -v_2+iv_3\\
 v_2+iv_3 & u-iv_1
 \end{pmatrix},\qquad
 B=\begin{pmatrix}
  ib_1 & -b_2+ib_3\\
  b_2+ib_3 & -ib_1
 \end{pmatrix},
\]
and (\ref{zeroA}) is an autonomous version of this equation.

\end{remark}

\paragraph{Self-similar reduction.} The substitution
\[
 u(x,t)= (\varepsilon\tau)^2u(z),\quad 
 {\bf v}(x,t)= (\varepsilon\tau)^2e^{\log(\tau)A}{\bf v}(z),\quad 
 \tau=t^{-1/3},\quad z=\varepsilon\tau x,\quad 3\varepsilon^3=-1,
\]
resembles to substitution (\ref{self}), but with different weights of variables. It reduces (\ref{KdV}) to the system
\begin{equation}\label{KdV-self}
 u'''=3uu'-3({\bf v},{\bf v}')+zu'+2u,\qquad 
 {\bf v}'''=3u{\bf v}'+3u'{\bf v}+z{\bf v}'+(A+2){\bf v}
\end{equation}
where $u\in{\mathbb R}$, ${\bf v}\in{\mathbb R}^n$ and $A+A^T=0$.

\paragraph{Lax representations.} We present Lax representations for the obtained reductions inherited from the zero curvature representation with matrices (\ref{KdV-UV}). All matrices are of block size $(1,1,n,1)\times (1,1,n,1)$. Let
\[
 {\cal B}_0=\begin{pmatrix}
  -1 & 0 & 0 & 0\\   
   0 & 0 & 0 & 0\\ 
   0 & 0 & 0 & 0\\   
   0 & 0 & 0 & 1 
 \end{pmatrix},\quad
 {\cal B}_1=\begin{pmatrix}
  0 & 1 & 0 & 0\\   
  u & 0 & 0 & 1\\ 
  {\bf v} & 0 & 0 & 0\\   
  0 & u & {\bf v}^T & 0 
 \end{pmatrix},\quad
 {\cal B}_2=
 \begin{pmatrix}
 u & 0 & 0 & 0\\   
 u' & 0 & {\bf v}^T & 0\\ 
 {\bf v}' & -{\bf v} & 0 & 0\\   
 0 & u' & ({\bf v}')^T & -u 
 \end{pmatrix}. 
\]
The matrix ${\cal B}$ in all three cases is ${\cal B}=\zeta{\cal B}_0+{\cal B}_1$. The matrices ${\cal A}$ are specified in the following statement.

\begin{proposition}
The traveling wave equations (\ref{KdV-wave}) admit the isospectral Lax pair ${\cal A}_z=[{\cal B},{\cal A}]$ with ${\cal A}=\zeta^3{\cal B}_0+\zeta^2{\cal B}_1+\zeta{\cal B}_2 +{\cal B}_3$, where
\[
 {\cal B}_3=
 \begin{pmatrix}
  u' & -u & {\bf v}^T & 0\\   
  \frac{1}{2}u^2-\frac{1}{2}({\bf v},{\bf v})+\beta & 0 & ({\bf v}')^T & -u\\ 
  {\bf v}''-2u{\bf v} & -{\bf v}' & -A & {\bf v}\\   
  0 & \frac{1}{2}u^2-\frac{1}{2}({\bf v},{\bf v})+\beta & ({\bf v}'')^T-2u{\bf v}^T & -u' 
 \end{pmatrix};
\]
the Galilean reduction (\ref{KdV-Gal}) admits the isomonodromic Lax pair (\ref{AB-pair}) with ${\cal A}=\zeta^4{\cal B}_0+\zeta^3{\cal B}_1+\zeta^2{\cal B}_2 +\zeta{\cal B}_3+{\cal B}_4$, where
\[
 {\cal B}_3=
 \begin{pmatrix}
  u' & -u & {\bf v}^T & 0\\   
  \frac{1}{2}u^2-\frac{1}{2}({\bf v},{\bf v})+z & 0 & ({\bf v}')^T & -u\\ 
  {\bf v}''-2u{\bf v} & -{\bf v}' & -A & {\bf v}\\   
  0 & \frac{1}{2}u^2-\frac{1}{2}({\bf v},{\bf v})+z & ({\bf v}'')^T-2u{\bf v}^T & -u' 
 \end{pmatrix},\quad
 {\cal B}_4=\begin{pmatrix}
   0 & 0 & 0 & 0\\   
   1 & 0 & 0 & 0\\ 
   0 & 0 & 0 & 0\\   
   0 & 1 & 0 & 0 
 \end{pmatrix};
 \]
the self-similar reduction (\ref{KdV-self}) admits the isomonodromous Lax pair (\ref{AB-pair}) with ${\cal A}=-(\zeta-z\zeta^{-1}){\cal B}-{\cal B}_2-\zeta^{-1}{\cal B}_3$, where
\[
 {\cal B}_3=
 \begin{pmatrix}
 u'+1 & -u & {\bf v}^T & 0\\   
 u''-u^2+({\bf v},{\bf v}) & 0 & ({\bf v}')^T & -u\\ 
 {\bf v}''-2u{\bf v} & -{\bf v}' & -A & {\bf v}\\   
 0 & u''-u^2+({\bf v},{\bf v}) & ({\bf v}'')^T-2u{\bf v}^T & -u'-1 
 \end{pmatrix}.
\]
\end{proposition}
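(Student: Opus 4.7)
The strategy is to inherit each of the three Lax pairs directly from the zero curvature representation (\ref{KdV-UV}) of the vector KdV equation (\ref{KdV}). For each similarity reduction, one substitutes the corresponding ansatz for $u(x,t)$ and ${\bf v}(x,t)$ into ${\cal U}$ and ${\cal V}$, performs a gauge transformation by the block-diagonal matrix $T(t)=\operatorname{diag}(1,1,e^{tA},1)$ that absorbs the factor $e^{tA}$ sitting in front of ${\bf v}(z)$, and reparametrizes the spectral parameter $\lambda\mapsto\zeta$ with a suitable shift or rescaling. The zero curvature equation ${\cal U}_t={\cal V}_x+[{\cal V},{\cal U}]$ then collapses to either the isospectral equation ${\cal A}_z=[{\cal B},{\cal A}]$ (for the autonomous traveling wave case) or to the isomonodromic form (\ref{AB-pair}) (for the two nonautonomous cases); matching powers of $\zeta$ identifies the claimed block matrices ${\cal B}_i$.

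For the traveling wave case $u(x,t)=u(z)$, ${\bf v}(x,t)=e^{tA}{\bf v}(z)$ with $z=x-\gamma t$, the preliminary shift $u\mapsto u+\gamma/3$ sets $\gamma=0$, while the $-A$ entry in the interior block of ${\cal B}_3$ is contributed by $T_tT^{-1}$, the generator of the gauge transformation. For the Galilean reduction $u=u(z)+t$, ${\bf v}=e^{tA}{\bf v}(z)$, $z=x-\tfrac{3}{2}t^2$, the substitution $u\mapsto u+t$ replaces $\beta$ in the traveling wave ${\cal B}_3$ by $z$, and the additional constant matrix ${\cal B}_4$ arises from the $\lambda^2 t$-contribution to ${\cal V}$ produced by $u\mapsto u+t$ at the quadratic-in-$\lambda$ positions of ${\cal U}$; its $t$-dependence is precisely what gets balanced against ${\cal B}_\zeta={\cal B}_0$ on the right side of (\ref{AB-pair}). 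For the self-similar reduction, one uses in addition the substitution $\lambda=\varepsilon\tau\zeta$ (analogous to Section \ref{s:mKdV}) to separate the $\tau$-dependence, and the $\tau$-derivative of this change of spectral parameter is responsible for the $-(\zeta-z\zeta^{-1}){\cal B}$ portion of the stated ${\cal A}$.

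The main obstacle is computational bookkeeping: one must verify block by block that the gauge-transformed ${\cal V}$, after reparametrization of $\lambda$, coincides with the claimed ${\cal A}$. In particular the cross-blocks containing ${\bf v}$ and ${\bf v}^T$ transform by conjugation with $e^{tA}$ on one side, the $(n,n)$-block picks up the extra $-A$, and the scalar blocks are unaffected. Once the polynomial expansion of the gauge-transformed ${\cal V}$ in $\zeta$ has been matched with the stated one, the identification of the resulting Lax equations with the reduced systems (\ref{KdV-wave}), (\ref{KdV-Gal}), and (\ref{KdV-self}) follows by direct substitution of the equations of motion into the commutators.
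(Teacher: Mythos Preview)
Your approach is exactly what the paper has in mind: the paper states only that these Lax pairs are ``inherited from the zero curvature representation with matrices (\ref{KdV-UV})'' and, earlier in the introduction, that ``the procedure for prolongation of a similarity reduction to linear problems is fairly standard and we omit it.'' Your outline---substituting the reduction ansatz, gauging away the factor $e^{tA}$ by the block-diagonal $T(t)$, reparametrizing the spectral parameter, and then verifying block by block---is precisely this standard procedure, so the proposal matches the paper's (unwritten) proof.
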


\section{Conclusion}

In papers \cite{Sokolov-Wolf}, \cite{Meshkov-Sokolov} and \cite{Balakhnev-Meshkov}, a general theory of integrable vector evolution equations of the form 
\begin{equation}\label{evol}
{\bf u}_t = \sum f_i {\bf u}_i, \qquad {\bf u}_i = \frac{\partial^i {\bf u}}{\partial x^i}, \qquad   {\bf u}\in{\mathbb R}^n
\end{equation}
was developed and many examples of such equations have been constructed. It is assumed that the coefficients $f_i({\bf u}, {\bf u}_x, \dots)$ in the formula \eqref{evol} are scalar. In the simplest isotropic case, they depend on the scalar products of their arguments (see, for example, \eqref{mKdV-1} and \eqref{mKdV-2}). Using various similarity reductions of these evolution equations, a large number of vector ODE systems can be constructed, both Liouville integrable and satisfying the Painlev\'e property. The systematic study of these systems is an open task awaiting its researcher. For instance, it would be useful to study self-similar reductions of integrable vector systems of the derivative NLS type \cite{Sokolov-Wolf}.

Note that the reductions leading to isotropic vector system of the form
\begin{equation}\label{ODE}
{\bf u}_k = \sum_{i=0}^{k-1} g_i {\bf u}_i , \qquad {\bf u}\in{\mathbb R}^n
\end{equation}
with scalar coefficients $g_i$ are somehow degenerate in the sense that, without loss of generality, one can assume that $n\le k$. Indeed, the solution of this system is determined by the initial data ${\bf u}_0(0)={\bf r}_0,\dots,{\bf u}_{k-1}(0)={\bf r}_{k-1}$, where ${\bf r}_i$ are arbitrary constant vectors in the $n$-dimensional space.

\begin{lemma}\label{lem} 
The vector ${\bf u}(x)$ lies in the subspace spanned by ${\bf r}_0,\dots,{\bf r}_{k-1}$ for all $x$, regardless of the dimension $n$ of the entire space.
\end{lemma}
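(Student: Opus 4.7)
The plan is to exploit the fact that the coefficients $g_i$ in (\ref{ODE}) are scalars, so the right-hand side of the ODE is always a scalar combination of the vectors $\mathbf{u}_0,\dots,\mathbf{u}_{k-1}$. Consequently, the subspace $V=\operatorname{span}(\mathbf{r}_0,\dots,\mathbf{r}_{k-1})$ should be invariant under the flow. The cleanest way to see this is to test the equation against any fixed vector $\mathbf{w}$ in the orthogonal complement $V^\perp$ and derive a linear scalar ODE with vanishing initial data.

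Concretely, I fix a solution $\mathbf{u}(x)$ of (\ref{ODE}) and, for each $\mathbf{w}\in V^\perp$, set $\varphi(x)=(\mathbf{w},\mathbf{u}(x))$. Since $\mathbf{w}$ is a constant vector, differentiation gives $\varphi^{(i)}(x)=(\mathbf{w},\mathbf{u}_i(x))$. Taking the scalar product of (\ref{ODE}) with $\mathbf{w}$ and using that $g_i$ are scalars yields the scalar equation
\[
 \varphi^{(k)}(x)=\sum_{i=0}^{k-1} g_i(x)\,\varphi^{(i)}(x),
\]
where $g_i(x)$ are now regarded as known coefficients, namely their values along the given solution. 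This is a \emph{linear} homogeneous ODE of order $k$ in $\varphi$, and the initial data are $\varphi^{(i)}(0)=(\mathbf{w},\mathbf{r}_i)=0$ for $i=0,\dots,k-1$ by the choice of $\mathbf{w}$. By uniqueness for linear scalar ODEs, $\varphi\equiv 0$.

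Since this holds for every $\mathbf{w}\in V^\perp$, we conclude $\mathbf{u}(x)\in V$ for all $x$ in the interval of existence, independently of the ambient dimension $n$. In particular, the relevant dimension is at most $k$, which proves the claim.

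I do not expect a serious obstacle here: the only subtlety is the possible dependence of $g_i$ on $\mathbf{u},\mathbf{u}_x,\dots$ (through scalar products, as mentioned in the paper), but this is harmless because we freeze these coefficients along the given solution before applying the scalar uniqueness theorem. One should simply note that the argument requires local existence and uniqueness of $\mathbf{u}(x)$ itself, which is assumed implicitly in speaking of ``the solution determined by the initial data''.
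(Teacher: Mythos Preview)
Your argument is correct and is essentially the same as the paper's: the paper observes that for any constant vector ${\bf C}$ the conditions $({\bf u},{\bf C})=({\bf u}_1,{\bf C})=\dots=({\bf u}_{k-1},{\bf C})=0$ define an invariant submanifold of (\ref{ODE}), which is exactly your statement that $\varphi=({\bf w},{\bf u})$ satisfies a homogeneous linear scalar ODE with vanishing initial data. Your version simply makes explicit the uniqueness step that the paper leaves implicit in the phrase ``invariant submanifold''.
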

\begin{proof} 
Equations $({\bf u},{\bf C})=({\bf u}_1,{\bf C})=\dots=({\bf u}_{k-1},{\bf C})=0$ define an invariant submanifold of equation (\ref{ODE}), for any constant vector ${\bf C}$.
\end{proof}

By this reason, if we wish to construct vector ODE reductions of arbitrary dimension, we should generalize the class of systems of the form \eqref{ODE}, by relaxing the condition that coefficients $g_i$ are scalar. This happens, for example, when the symmetry group used for reduction contains a subgroup of orthogonal transformations and an arbitrary skew-symmetric matrix appears in the coefficients (see, for example, the systems \eqref{KSGarnier}, \eqref{KSGarnier-2}, \eqref{P21}, \eqref{P22} and \eqref{KdV-Gal}). According to Lemma \ref{lem}, fully isotropic systems  \eqref{ODE} are interesting only for small $n$. Further reducing of their order is possible by restriction to the invariants of the symmetry subgroup ${\rm O}(n)$. Some examples were discussed in section \ref{s:A0}.

One of non-trivial problems that we have not touched upon in this article is the construction of B\"acklund transformations for integrable vector systems of ODEs. For the systems considered in this article, it is solved only for the autonomous Garnier type systems \cite{Suris_2003}, and for the non-autonomous deformation (\ref{Garnier-1}) \cite{Adler_Kolesnikov_2023}. It is known that B\"acklund transformations are associated with various algebraic structures such as the Yang--Baxter equation and others. It would be interesting to develop a theory of such structures in the vector case. 

\subsubsection*{CRediT authorship contribution statement}

V.E. Adler: Conceptualization (equal); Investigation (equal); Writing --- original draft (equal); Writing --- review \& editing (equal). V.V. Sokolov: Conceptualization (equal); Investigation (equal); Writing --- original draft (equal); Writing --- review \& editing (equal).

\subsubsection*{Declaration of competing interest}

The authors have no conflicts to disclose.

\subsubsection*{Acknowledgments}

The authors are grateful to R.~Conte for helpful discussions. The research of VVS is an output of a research project implemented as part of the Basic Research Program at the National Research University Higher School of Economics (HSE University). The research of VEA is supported by state contract FFWR-2024-0012 of the Landau Institute for Theoretical Physics of the Russian Academy of Sciences.

\subsubsection*{Data availability}

No data was used for the research described in the article..

 
\end{document}